\setlist{nolistsep,leftmargin=*}
\definecolor{Myblue}{rgb}{.2,0,1}
\definecolor{Myred}{rgb}{1,.2,.2}
\newcommand{\cmt}[1] {}
\def\myvec{\mathaccent"017E } 
\newtheorem{property}{Property}
\newcommand{\simpl}[1]{\ensuremath{\mathcal{S}(#1)}}
\newcommand{\canon}{\ensuremath{\mathcal{C}}}
\newcommand{\csimpl}{\ensuremath{\stackrel{\canon}{\rightarrow}}}
\newcommand{\ssimpl}{\ensuremath{\stackrel{\Simpl}{\rightarrow}}}
\newcommand{\acar}{\ensuremath{\mathbf{0}}}
\newcommand{\acdr}{\ensuremath{\mathbf{1}}}
\newcommand{\bcar}{\ensuremath{\bar\acar}}
\newcommand{\bcdr}{\ensuremath{\bar\acdr}}
\newcommand{\clazy}{\ensuremath{{\mathbf{2}}}}
\newcommand{\epsilonset}{\ensuremath{\lbrace\epsilon\rbrace}}
\newcommand{\var}[1]{\raisebox{.4mm}{\ensuremath{#1}}}
\newcommand{\Lfonly}{\ensuremath{\mathsf{DS}}}
\newcommand{\Lf}[3]{\ensuremath{\DT_{\mathit #1}^{#2}( {\mathit #3})}}
\newcommand{\Lfun}[2]{\ensuremath{\Lfonly_{\mathit #1}^{#2}}}
\newcommand{\Demand}[1]{\ensuremath{\mathsf{D}_{#1}}}
\newcommand{\Demandp}[1]{\ensuremath{\mathsf{D}'_{#1}}}
\newcommand{\lang}[1]{\ensuremath{\mathsf{L}({#1})}}
\newcommand{\mainpgm}{\ensuremath{\mathbf{main}}}
\newenvironment{proofoutline}[1][Proof Outline]{\begin{trivlist}
  \item[\hskip \labelsep {\sc #1.}]}{\end{trivlist}}
\newcommand{\len}{\ \vdash^l\ }
\newcommand{\pp}[2]{\ensuremath{#2}} 
\newcommand{\removed}{\ensuremath{\square}}
\newcommand{\stack}    {\ensuremath{\mathbf{S}}}
\newcommand{\heap}     {\ensuremath{\mathbf{H}}}
\newcommand{\demand}   {\ensuremath{\sigma}}
\newcommand{\DT}       {\ensuremath{\mathds{DS}}}
\newcommand{\A}      {\ensuremath{\mathcal{A}}}
\newcommand{\mIF}     {\ensuremath{\mathbf{if}}}
\newcommand{\mLET}    {\ensuremath{\mathbf{let}}}
\newcommand{\mNIL}    {\ensuremath{\mathbf{nil}}}
\newcommand{\mCAR}    {\ensuremath{\mathbf{car}}}
\newcommand{\mCDR}    {\ensuremath{\mathbf{cdr}}}
\newcommand{\mCONS}   {\ensuremath{\mathbf{cons}}}
\newcommand{\mRETURN} {\ensuremath{\mathbf{return}}}
\newcommand{\mNULLQ}  {\ensuremath{\mathbf{null?}}}
\newcommand{\mIN}     {\ensuremath{\mathbf{in}}}
\newcommand{\infscale}[1]{\scalebox{.99}{#1}}
\newcommand{\sinfrule}[3]{\ifthenelse{\isempty{#1}}{
  \infrule{\infscale{\ensuremath{#2}}}
          {\infscale{\ensuremath{#3}}}}{%
  \infrule[\infscale{#1}]
          {\infscale{\ensuremath{#2}}}
          {\infscale{\ensuremath{#3}}}}}
\newcommand{\demandlang}{\ensuremath{(\acar+\acdr)^\ast}}
\definecolor{myblue}{rgb}{0.0,0.0,0.5}
\begin{document}

\setlength{\pdfpageheight}{\paperheight}
\setlength{\pdfpagewidth}{\paperwidth}

\title{An Incremental Slicing Method for Functional Programs}

\author{Prasanna Kumar K.}
\affiliation{
  \department{Department of Computer Science}
  \institution{IIT Bombay}         
  \city{Mumbai}
  \postcode{400076}
  \country{India}
}
\email{prasannak@cse.iitb.ac.in}   

\author{Amitabha Sanyal}
\affiliation{
  \department{Department of Computer Science}
  \institution{IIT Bombay}         
  \city{Mumbai}
  \postcode{400076}
  \country{India}
}
\email{as@cse.iitb.ac.in}          

\author{Amey Karkare}
\affiliation{
  \department{Department of Computer Science} 
  \institution{IIT Kanpur}         
  \city{Kanpur}
  \postcode{208016}
  \country{India}
}

\email{karkare@cse.iitk.ac.in}     

\begin{abstract}
  Several applications of slicing require  a program to be sliced with
  respect to more than one slicing criterion.  Program specialization,
  parallelization  and  cohesion  measurement  are  examples  of  such
  applications.  These  applications can  benefit from  an incremental
  static  slicing  method  in  which   a  significant  extent  of  the
  computations  for slicing  with respect  to one  criterion could  be
  reused  for another.   In this  paper,  we consider  the problem  of
  incremental slicing of functional programs.
    
  We first present a non-incremental  version of the slicing algorithm
  which   does  a   polyvariant  analysis\footnote{In   a  polyvariant
    analysis~\cite{Smith:2000}, the  definition of a function  in some
    form  is  re-analyzed multiple  times  with  respect to  different
    application contexts.}   of functions. Since  polyvariant analyses
  tend to be costly, we  compute a compact context-independent summary
  of each function and then use this  summary at the call sites of the
  function. The  construction of  the function summary  is non-trivial
  and  helps  in the  development  of  the incremental  version.   The
  incremental  method  on  the  other  hand  consists  of  a  one-time
  pre-computation step that uses  the non-incremental version to slice
  the program  with respect to  a fixed default slicing  criterion and
  processes the results  further to a canonical  form.  Presented with
  an  actual  slicing  criterion,  the  incremental  step  involves  a
  low-cost computation that uses the results of the pre-computation to
  obtain the slice.

  We have implemented a prototype of the  slicer for a pure subset of
  Scheme,  with  pairs and lists  as the only algebraic  data types.
  Our experiments show that the incremental step of the slicer
  runs  orders  of  magnitude  faster  than  the  non-incremental
  version.    We have also proved the
  correctness of  our incremental  algorithm with respect  to the
  non-incremental version.

\end{abstract}
\acmYear{2017}
\acmMonth{9}

\maketitle
\keywords{
Software Engineering, Program Analysis, Program Slicing,
Functional Languages}

\section{Introduction}
\label{sec:intro}
Program slicing refers to the class of techniques that delete parts of
a  given  program while  preserving  certain  desired behaviors,  for
example, memory, state or parts of output. These behaviors are called
{\em  slicing criteria}.   Applications  of slicing  include
debugging     (root-cause    analysis),     program    specialization,
parallelization  and cohesion  measurement.  However,  in some  of the
above applications,  a program has to  be sliced more than  once, each
time  with a  different  slicing criterion.  In  such situations,  the
existing
techniques~\cite{weiser84, horwitz88, reps96, Liu:2003,
  Rodrigues_jucs_12_7, Silva_System_Dependence_Graph}   are
inefficient as they typically analyze the program multiple times.
Each round of analysis involves  a fixed point computation on the
program text or some intermediate  form of the program, typically
SDG  in the  case of  imperative languages.   We thus  require an
incremental  approach to  slicing which can avoid  repeated  fixpoint
computation  by reusing  some of  the information  obtained while
slicing the same program earlier with a different criterion.  


\begin{figure}[t!]
  \centering
  \renewcommand{\arraystretch}{1.1}
  \begin{tabular}{@{}|@{\ }p{.31\textwidth}|p{.31\textwidth}|p{.31\textwidth}|@{}} \hline
    \begin{uprogram}
      \UFL~~(\DEFINE\ (\linecharcount~\str~\lc~\cc)
      \UNL{1} (\SIF\ (\NULLQ~\str)
      \UNL{3}        (\SRETURN~(\CONS\ \lc\ \cc))
      \UNL{3}        (\SIF~(\EQ~(\CAR~\str)~$nl$)
      \UNL{5} (\SRETURN\ (\linecharcount~(\CDR\ \str)
      \UNL{14} (+ \lc\ 1)
      \UNL{14} (+ \cc\ 1)))
      \UNL{5} (\SRETURN\ (\linecharcount~(\CDR\ \str)
      \UNL{13} $\pi_1$:\lc
      \UNL{13} $\pi_2$:(+ \cc\ 1))))))
      \UNL{0}
      \UNL{0} $(\DEFINE\ (\mainpgm)$
      \UNL{1} $(\SRETURN\ (\linecharcount\ \ldots\ 0~0))))$
    \end{uprogram} &
    \begin{uprogram}
      \UFL(\DEFINE\ (\linecharcount~\str~\lc~\removed)
      \UNL{1} (\SIF\ (\NULLQ~\str)
      \UNL{3}        (\SRETURN~(\CONS\ \lc\ \removed))
      \UNL{3}        (\SIF~(\EQ~(\CAR~\str)~$nl$)
      \UNL{5} (\SRETURN\ (\linecharcount~(\CDR\ \str)
      \UNL{14} (+ \lc\ 1)
      \UNL{14} ~\removed))
      \UNL{5} (\SRETURN\ (\linecharcount~(\CDR\ \str)
      \UNL{13} $\pi_1$:\lc
      \UNL{13} $\pi_2$:~\removed)))))
      \UNL{0}
      \UNL{0} $(\DEFINE\ (\mainpgm)$
      \UNL{1} $(\SRETURN\ (\linecharcount\ \ldots\ 0~\removed))))$
    \end{uprogram} &
    \begin{uprogram}
      \UFL(\DEFINE\ (\linecharcount~\str~\removed~\cc)
      \UNL{1} (\SIF\ (\NULLQ~\str)
      \UNL{3}        (\SRETURN~(\CONS\ \removed\ \cc))
      \UNL{3}        (\SIF~(\EQ~(\CAR~\str)~$nl$)
      \UNL{5} (\SRETURN\ (\linecharcount~(\CDR\ \str)
      \UNL{14} \removed
      \UNL{14} (+ \cc\ 1)))
      \UNL{5} (\SRETURN\ (\linecharcount~(\CDR\ \str)
      \UNL{13} $\pi_1$:~\removed
      \UNL{13} $\pi_2$:~(+ \cc\ 1))))))
      \UNL{0}
      \UNL{0} $(\DEFINE\ (\mainpgm)$
      \UNL{1} $(\SRETURN\ (\linecharcount\ \ldots\ \removed~0))))$
    \end{uprogram} \\ \hline
    (a) Program to compute the number of lines and characters in a string. &
    (b) Slice of program in (a) to compute the number of lines only. &
    (c) Slice of program in (a) to compute the number of characters only.
    \\ \hline
  \end{tabular}
    \caption{A program in Scheme-like language and its slices. The parts that are sliced away are denoted by \removed.}\label{fig:mot-example}
\end{figure}

The        example       from~\cite{reps96}        shown       in
Figure~\ref{fig:mot-example}b motivates the  need for incremental
slicing. It shows a simple  program in a Scheme-like language. It
takes a  string as  input and  returns a  pair consisting  of the
number    of    characters    and   lines    in    the    string.
Figure~\ref{fig:mot-example}b shows the program when it is sliced
with respect  to the first  component of the output  pair, namely
the number of lines in the  string ({\tt lc}).  All references to
the  count   of  characters   ({\tt  cc})  and   the  expressions
responsible for  computing {\tt  cc} only  have been  sliced away
(denoted $\removed$).   The same  program can  also be  sliced to
produce only the char count and the resulting program is shown in
Figure~\ref{fig:mot-example}c.  \mbox{}

The  example illustrates  several important  aspects for  an effective
slicing  procedure.  We  need the  ability to  specify a  rich set  of
slicing  criteria to  select  different parts  of  a possibly  complex
output structure (first and second component of the output pair in the
example, or say,  every even element in an output  list).  Also notice
that to compute some part of  an output structure, all prefixes of the
structure have to be computed.  Thus, slicing criteria have to be {\em
  prefix-closed}.   Finally, it  seems likely  from the  example, that
certain  parts  of   the  program  will  be  present   in  any  slice,
irrespective  of the  specific slicing  criterion\footnote{the trivial
  {\em null} slicing  criteria where the whole program  is sliced away
  is  an  exception, but  can  be  treated separately.}.   Thus,  when
multiple slices of the same  program are required, a slicing procedure
should strive for efficiency  by minimizing re-computations related to
the common parts.

In  this paper,  we consider  the problem  of incremental  slicing for
functional programs. We restrict ourselves  to tuples and lists as the
only  algebraic data  types.   We represent  our  slicing criteria  as
regular grammars that  represent sets of prefix-closed  strings of the
selectors \CAR\/ and \CDR.  The  slicing criterion represents the part
of the output of  the program in which we are  interested, and we view
it  as being  a  \emph{demand} on  the program.   We  first present  a
non-incremental  slicing  method, which propagates the demand represented
by  the slicing  criterion  into  the program.
In   this    our   method   resembles the   projection    function   based
methods of~\cite{reps96, Liu:2003}.  However, unlike these methods, we do
a context-sensitive analysis of functions calls. This makes our method
precise by  avoiding analysis  over infeasible  interprocedural paths.
To  avoid the  inefficiency  of  analyzing a  function  once for  each
calling context,  we create a compact  context-independent summary for
each function. This summary is then  used to step over function calls.
As we shall see, it is this context independent summary that also makes the
incremental version possible in our approach.

The incremental version, has a  one-time pre-computation step in which
the program is sliced with  respect to a \emph{default criterion} that
is same for all  programs.  The result of this step  is converted to a
set  of  automata, one  for  each  expression  in the  program.   This
completes the pre-computation step.  To decide whether a expression is
in the  slice for a given  slicing criterion, we simply  intersect the
slicing criterion with the  automaton corresponding to the expression.
If the result  is the empty set, the expression can be  removed from the
slice.

The main contributions of this paper are as follows:
\begin{enumerate}
\item We propose a view of the  slicing criterion in terms of a notion
  called {\em  demand} (Section~\ref{sec:demand})  and  formulate the
  problem of  slicing as  one of  propagating the  demand on  the main
  expression to all the sub-expressions  of the program.  The analysis
  for this is precise because it  keeps the information at the calling
  context  separate.   However it  attempts  to  reduce the  attendant
  inefficiency through  the use of function  summaries. The difficulty
  of creating function summaries in a polyvariant analysis, especially
  when the  domain of analysis  is unbounded,  has been pointed out
  in \cite{reps96}.
\item  Our   formulation  (Section~\ref{sec:grammar-formulation})
  allows us to derive an incremental version of slicing algorithm
  that factors  out computations  common to all  slicing criteria
  (Section~\ref{sec:incr-analysis})     and      re-uses     these
  computations.  To  the best  of our knowledge,  the incremental
  version of slicing in this form has not been attempted before.
\item  We  have proven  the  correctness  of  the  incremental  slicing
  algorithm   with  respect   to   the  non-incremental   version
  (Section~\ref{sec:incremental-proofs}).     
\item We  have implemented a  prototype slicer for  a first-order
  version of Scheme (Section~\ref{sec:exp-result}).  We have also
  extended   the   implementation    to   higher-order   programs
  (Section~\ref{sec:higher-order}) by converting such programs to
  first-order             using             \emph{firstification}
  techniques~\cite{Mitchell:2009},    slicing   the    firstified
  programs using our slicer, and  then mapping the sliced program
  back   to  the   higher-order   version.   The   implementation
  demonstrates the expected benefits  of incremental slicing: the
  incremental step  is one to four orders  of magnitude faster  than the
  non-incremental version.
\end{enumerate}

\begin{figure}
\newcommand{\isdef}{\ensuremath{\!::=\!\!}}
  \renewcommand{\arraystretch}{.95}
  $ \begin{array}{@{}r@{\ }c@{\ }l@{}}
    p \in \mathit{Prog} & \isdef & d_1 \ldots d_n \,\, e_\mainpgm
    \hspace{5.5em} \mbox{\em --- program}\\ & \\
    d \in \mathit{Fdef} &
    \isdef & (\DEFINE\,\, (f\,\, x_1 \,\, \ldots \,\,x_n)\,\, e)
    \hspace{1.5em} \mbox{\em --- function definition} \\ & \\
    e \in
    \mathit{Expr} & \isdef &
    \left\{\begin{array}{@{}ll@{\hspace{3.6em}}l} (\SIF\,\, x\,\,
    e_1\,\, e_2) && \mbox{\em --- conditional} \\ (\LET\,\, x
    \leftarrow s\,\, \IN\,\, e) && \mbox{\em --- let binding}
    \\ (\SRETURN\,\, x) && \mbox{\em --- return from function}
    \end{array}\right. \\ &\\
    s \in \mathit{App} & \isdef &
    \left\{\begin{array}{@{}l@{}r@{\hspace{1em}}l} k & $\NIL$&
    \mbox{\em --- constants}\\ (\CONS\,\, x_1\,\, x_2) && \mbox{\em
      --- constructor} \\ (\CAR\,\, x) & (\CDR\,\, x) & \mbox{\em ---
      selectors} \\ (\NULLQ\,\, x) & (\PRIM\,\, x_1\,\, x_2) &
    \mbox{\em --- tester/generic-arithmetic} \\
    \multicolumn{2}{@{}l}{(f\,\, x_1\,\,\ldots\,\, x_n)} &
    \mbox{\em --- function application}
    \end{array}\right.
    \\
\end{array}$
\caption{The syntax of our language\label{fig:lang-syntax}}
\end{figure}

\section{The target language---syntax and semantics}
\label{sec:defs}
Figure~\ref{fig:lang-syntax}  shows the  syntax of  our language.
For   ease  of   presentation,  we   restrict  the   language  to
Administrative Normal Form (ANF)~\cite{chakravarty03perspective}.
In this form,  the arguments to functions can  only be variables.
To  avoid  dealing  with  scope-shadowing,  we  assume  that  all
variables  in  a program  are  distinct.   Neither of  these  two
restrictions affect the expressibility of our language.  In fact,
it is a simple matter to transform the pure subset of first order
Scheme  to our  language,  and  map the  sliced  program back  to
Scheme.  To refer to an expression $e$, we may annotate it with a
label $\pi$ as $\pi\!:\!e$; however the  label is not part of the
language.  To keep  the description simple, we  shall assume that
each program has its own unique set of labels.  In other words, a
label  identifies both  the program  point and  the program  that
contains it.

A  program in  our language  is a  collection of  function definitions
followed by  a main expression denoted  as $e_\mainpgm$.  Applications
(denoted  by   the  syntactic  category  $\mathit{App}$)   consist  of
functions   or   operators    applied   to   variables.    Expressions
($\mathit{Expr}$) are either an $\SIF$ expression, a $\LET$ expression
that evaluates an application and binds the result to a variable, or a
$\SRETURN$ expression.  The $\SRETURN$ keyword is used to mark the end
of a function so as to initiate appropriate semantic actions during 
execution.  The distinction between  expressions and applications will
become important while specifying the semantics of programs.


\begin{figure}
  \[\begin{array}{|@{\ }c@{\ }|c|c|}\hline
  \mbox{Premise}& \mbox{Transition} & \mbox{Rule Name} \\\hline\hline
  &\rho, \heap, \pp{\pi}{k} \rightsquigarrow \heap, k& \mbox{\sc const}
  \\ \hline
  \rho(x) \in \mathbb{N} \andalso \rho(y)\in \mathbb{N}&
  \rho, \heap, \pp{\pi}{(\PRIM\,\, x\ y)}
  \rightsquigarrow \heap, \rho(x) + \rho(y)&
  \mbox{\sc prim}
  \\ \hline
  \heap(\rho(x)) = (v_1, v_2) &
  \rho, \heap, \pp{\pi}{(\CAR\,\, x)} \rightsquigarrow \heap, v_1 &
  \mbox{\sc car}
  \\ \hline
  \heap(\rho(x)) = (v_1, v_2) &
  \rho, \heap, \pp{\pi}{(\CDR\,\, x)} \rightsquigarrow \heap, v_2 &
  \mbox{\sc cdr}     
  \\ \hline
  \mbox{$\ell \not\in \mbox{dom}(\heap)$ is a fresh location}&
  \rho, \heap, \pp{\pi}{(\CONS\,\, x\ y)} \rightsquigarrow \heap[\ell\mapsto(\rho(x),\rho(y))], \ell &
  \mbox{\sc cons}     
  \\ \hline
  \rho(x) \in \mathbb{N}\setminus\{0\}&
  \rho,\stack,
      \heap, \pp{\pi}{(\SIF\ x\ \pp{\pi_2}{e_1}\ \pp{\pi_3}{e_2})}
      \longrightarrow \rho,\stack, \heap, e_1 &
  \mbox{\sc if-true}     
  \\ \hline
  \rho(x) = 0&
  \rho,\stack, \heap, \pp{\pi}{(\SIF\ x\ \pp{\pi_2}{e_1}\ \pp{\pi_3}{e_2})}
  \longrightarrow \rho,\stack, \heap, e_2 &
  \mbox{\sc if-false}     
  \\ \hline
  \rho(x)\not= \NIL&
  \rho, \heap, \pp{\pi}{(\NULLQ\,\, x)} \rightsquigarrow \heap, 0&
  \mbox{\sc null-true}     
  \\ \hline
  \rho(x) = \NIL&
  \rho, \heap, \pp{\pi}{(\NULLQ\,\, x)} \rightsquigarrow \heap, 1&
  \mbox{\sc null-false}     
  \\ \hline
  \begin{array}{@{\ }l@{\ }}
    \mbox{$s$ is $\pp{\pi}{(f\,\, \pp{\pi_1}{y_1} \ldots \pp{\pi_n}{y_n})}$} \\
    \mbox{$f$ is $(\DEFINE\ (f\ z_1\ \ldots\ z_n)\ \pp{\pi_f}{e_f})$}
  \end{array}
  &
  \rho,\stack, \heap, \pp{\pi}{(\LET\,\, x \leftarrow
    \pp{\pi_1}{s}\,\, \IN\,\, \pp{\pi_2}{e})} \longrightarrow
      [\myvec{z} \mapsto \rho(\myvec{y})], \,(\rho,x,e)\bullet\stack,\,
      \heap, e_f&   
  \mbox{\sc let-fncall}     
  \\ \hline
  \begin{array}{@{\ }l@{\ }}
    \rho, \heap, \pp{\pi_1}{s} \rightsquigarrow \heap', v\\
    \mbox{$s$ is not $\pp{\pi}{(f\,\, \pp{\pi_1}{y_1} \ldots
        \pp{\pi_n}{y_n})}$ }
  \end{array}&
  \rho,\stack, \heap,
      \pp{\pi}{(\LET\,\, x \leftarrow \pp{\pi_1}{s}\,\, \IN\,\,
        \pp{\pi_2}{e})} \longrightarrow \rho[x \mapsto v], \stack,
      \heap', \pp{\pi_2}{e}&
  \mbox{\sc let-nonfn}     
  \\ \hline
  &\rho, \,(\rho',x',e')\bullet\stack,\,
  \heap, \pp{\pi}{(\SRETURN\ x)} \longrightarrow \rho'[x' \mapsto
    \rho(x)], \stack, \heap, e'&
  \mbox{\sc return}     
  \\ \hline
  \end{array}\]
\caption{The semantics of our language\label{fig:syntax-sematics}}
\end{figure}

\subsection{Semantics}
We now present  the operational semantics for  our language. This
is   largely  borrowed   from~\cite{asati14,  lazyliveness} and we
include it here for  completeness.  We  start with  the domains
used by the semantics:
\[
\begin{array}{rlcl@{\hspace{2em}}l}
v: & \mathit{Val} &=& \mathbb{N} + \{\NIL\} + \mathit{Loc}& \hspace{-0.5cm}\mbox{-- Values}\\ \rho: & \mathit{Env}
&=&\mathit{Var} \rightarrow {Val} &
\hspace{-0.5cm}\mbox{-- Environment} \\ \heap: & \mathit{Heap} &
=&\mathit{Loc} \rightarrow (Val\times Val + \{empty\}) &\hspace{-0.5cm}
\mbox{-- Heap}
\end{array}
\]

A value  in our language  is either a  number, or the  empty list
denoted by  $\NIL$, or a  location in the heap.  The  heap maps
each location  to a pair of values denoting a cons cell. Heap
locations can also be empty. Finally, an environment is a mapping
from variables  to values.

The    dynamic   aspects     of    the    semantics,   shown    in
Figure~\ref{fig:syntax-sematics},   are   specified  as   a   state
transition system. The semantics of applications $s$ are given by
the judgement  form $\rho, \heap, s  \rightsquigarrow \heap', v$,
and those for expressions $e$ by the form $\rho, \stack, \heap, e
\rightarrow \rho', \stack', \heap', e'$. Here $\stack$ is a stack
consisting of  continuation frames  of the  form $(\rho,  x, e)$.
The frame $(\rho,  x, e)$ signifies that if  the current function
returns a value $v$, the next  expression to be evaluated is $e$,
and the  environment for this  evaluation is $\rho$  updated with
the   variable  $x$   bound   to  $v$.    The   start  state   is
$(\{\}_\rho,[\,]_\stack,    \{\}_\heap,    e_\mainpgm)$,    where
$\{\}_\rho$ is the empty  environment, $[\,]_\stack$ is the empty
stack,  and   $\{\}_\heap$  is  the  empty   heap.   The  program
terminates successfully  with result value $\rho(x)$  on reaching
the halt state $(\rho,[\,]_\stack, \heap, (\SRETURN\ x))$. We use
the  notation  $\rho[x  \mapsto  v]$ to  denote  the  environment
obtained by  updating $\rho$ with  the value  for $x$ as  $v$. We
also use $[\myvec{x} \mapsto \myvec{v}]$ to denote an environment
in which each $x_i$ has the value $v_i$.

\section{Demand}
\label{sec:demand}
We     now     connect     slicing    with     a     notion     called
\emph{demand}. A demand  on an expression represents the
set of paths that the context  of the expression \emph{may} explore of
the  value  of   the  expression.   A  demand  is   represented  by  a
prefix-closed set of  strings over $\demandlang$.  Each  string in the
demand, called an \emph{access path},  represents a traversal over the
heap.  $\acar$  stands for  a single-step traversal  over the  heap by
dereferencing the  $\CAR$ field  of a  cons cell.   Similarly, $\acdr$
denotes the dereferencing of the $\CDR$ field of a cons cell.
 
As an  example, a demand  of $\{\epsilon, \acdr, \acdr\acar\}$  on the
expression $(\CONS~x~y)$  means its  context \emph{may} need  to visit
the  \CAR\ field  of  $y$ in  the heap  (corresponding  to the  string
\acdr\acar\ in the demand).  The  example also illustrates why demands
are prefix-closed---the \CAR\  field of $y$ cannot  be visited without
visiting  first  the  cons  cell  resulting  from  the  evaluation  of
$(\CONS~x~y)$   (represented  by   $\epsilon$)  and   then  the   cell
corresponding to $y$ (represented by \acdr).  The absence of \acar\ in
the demand  also indicates that  $x$ is {\em definitely  not} visited.
Notice that  to meet the  demand $\{\epsilon, \acdr,  \acdr\acar\}$ on
$(\CONS~x~y)$,  the  access  paths  $\{\epsilon, \acar\}$  has  to  be
visited starting  from $y$.  Thus we  can think of $(\CONS~x~y)$  as a
\emph{demand transformer} transforming  the demand $\{\epsilon, \acdr,
\acdr\acar\}$ to the demand $\{\epsilon, \acar\}$ on $y$ and the empty
demand (represented by $\emptyset$) on $x$.

The slicing  problem is now  modeled as follows.  Viewing  the slicing
criterion  (also   a  set   of  strings   over  $\demandlang$)   as  a
demand\footnote{supplied by a context that is external to the program}
on the  main expression  $e_\mainpgm$, we compute  the demand  on each
expression in the program.  If the demand on a expression turns out to
be $\emptyset$,  the expression does  not contribute to the  demand on
$e_\mainpgm$ and can be removed from  the slice.  Thus the solution of
the slicing problem lies in computing a demand transformer that, given
a  demand on  $e_\mainpgm$, computes  a \emph{demand  environment}---a
mapping of each expression (represented by its program point $\pi$) to
its  demand.  We  formulate  this computation  as  an analysis  called
\emph{demand analysis}.

We use $\sigma$ to represent  demands and $\alpha$ to represent access
path.  Given  two access paths  $\alpha_1$ and $\alpha_2$, we  use the
juxtaposition  $\alpha_1\alpha_2$ to  denote their  concatenation.  We
extend this  notation to a concatenate  a pair of demands  and even to
the  concatenation  of a  symbol  with  a demand:   $\sigma_1\sigma_2$
denotes  the   demand  $\{\alpha_1\alpha_2\mid  \alpha_1   \in  \sigma
~\mbox{and}~ \alpha_2 \in \sigma_2\}$ and $\acar\sigma$ is a shorthand
for $\{\acar\alpha \mid \alpha \in \sigma\}$.

\begin{figure*}[t]
\begin{align*}
  \mathit{\A}(\pi\!\!:\!\!\kappa, \demand, \DT) & = \{\pi
  \mapsto \demand\} , \ \text{for constants
    including} \ \mNIL \\
  \mathit{\A}(\pi\!\!:\!\!(\mNULLQ \ \pi_1\!\!:\!\!x), \demand, \DT) &= \{\pi_1 \mapsto \mbox{if}~ \sigma \ne \emptyset~\mbox{then}~\{\epsilon\}~\mbox{else}~ \emptyset,\; \pi \mapsto \demand\}\\
  \mathit{\A}(\pi\!\!\!:\!\!(\PRIM \ \pi_1\!\!:\!\!x \ \pi_2\!\!:\!\!y), \demand, \DT) & = \{\pi_1 \mapsto \mbox{if}~ \sigma \ne \emptyset~\mbox{then}~\{\epsilon\}~\mbox{else}~    \emptyset, \;\;\pi_2 \mapsto \mbox{if}~ \sigma \ne \emptyset~\mbox{then}~\{\epsilon\}~\mbox{else}~ \emptyset,  \;\;\pi \mapsto \demand\}\\
  \mathit{\A}(\pi\!\!:\!\!:(\mCAR\;\pi_1\!\!:x), \demand,    \DT) & = \{\pi_1 \mapsto \mbox{if}~ \sigma \ne    \emptyset~\mbox{then}~\{\epsilon\}\cup \acar\sigma    ~\mbox{else}~ \emptyset, \;\;\pi \mapsto    \demand\}\\
  \mathit{\A}(\pi\!\!:\!\!:(\mCDR\;\pi_1\!\!:x), \demand,    \DT) & = \{\pi_1 \mapsto \mbox{if}~ \sigma \ne    \emptyset~\mbox{then}~\{\epsilon\}\cup \acdr\sigma    ~\mbox{else}~ \emptyset, \;\;\pi \mapsto  \demand\}\\
  \mathit{\A}(\pi\!\!:\!\!(\mCONS \ \pi_1\!\!:\!\!x    \ \pi_2\!\!:\!\!y), \demand, \DT) & = \{\pi_1\!  \mapsto    \{\alpha \mid \acar\alpha \in \sigma\}, \pi_2\! \mapsto    \{\alpha \mid \acdr\alpha \in \sigma\}, \;\;\pi    \mapsto \demand\}\\
  \mathit{\A}(\pi\!\!:\!\!(f \ \pi_1\!\!:\!\!y_1    \ \cdots \ \pi_n\!\!:\!\!y_n), \demand, \DT) & =    \bigcup_{i=1}^n \{\pi_i \mapsto \DT^i_f(\demand)\} \cup \{\pi    \mapsto \demand \}\\
  \mathcal{D}(\pi\!\!:\!\!(\mRETURN    \ \pi_1\!\!:\!\!x), \demand, \DT) & = \{\pi_1 \mapsto    \demand, \pi \mapsto \demand\}    \\
  \mathcal{D}(\pi\!\!:\!\!(\mIF \ \pi_1\!\!:\!\!x~e_1~e_2),    \demand, \DT) & = \mathcal{D}(e_1, \demand, \DT) \cup    \mathcal{D}(e_2, \demand, \DT) ~\cup  \;\{ \pi_1    \mapsto \mbox{if}~ \sigma \ne    \emptyset~\mbox{then}~\{\epsilon\}~\mbox{else}~ \emptyset, \; \pi \mapsto \demand \}
    \\
\mathcal{D}(\pi\!\!:\!\!(\mLET \ \pi_1\!\!:\!\!x   \leftarrow s \ \mIN \ e),\demand,\DT) & = {\A}(s, \mathop{\cup}_{\pi \in \Pi}\mbox{DE}(\pi), \DT) \; \cup \{\pi \mapsto
    \demand\} \\
&~~~~\text{where} \ \mbox{DE} = \mathcal{D}(e, \demand, \DT), ~\text{and}~~\Pi~\text{represents all occurrences of $x$ in $e$},
\end{align*}\\[8pt]
\begin{minipage}[l]{\textwidth}
\infrule[demand-summary] { \forall f, \forall i, \forall \sigma: \;\;
  \mathcal{D}(e_\mathit{f},\sigma,\DT) = \mbox{DE}, \DT^{i}_{\mathit f}
  = \mathop{\bigcup}_{\pi \in \Pi}~\mbox{DE}(\pi) } { \mathit{df_1} \ldots \mathit{df_k} \len
  \DT}  \medskip
\end{minipage}\begin{minipage}{.2\textwidth}{}
\end{minipage}\\
\begin{minipage}{.8\textwidth}
  where
  $(\DEFINE\ (f\ z_1\ \ldots\ z_n)\ \ e_\mathit{f})$
  is one of $\mathit{df_1}$ \ldots $\mathit{df_k}$,
  $1\leq i\leq n$,  and $\Pi$ represents all occurrences of
  $z_i$ in $e_f$
\end{minipage}
\caption{Demand Analysis}\label{fig:demand-analysis}
\end{figure*}

\subsection{Demand Analysis}
Figure~\ref{fig:demand-analysis}   shows  the   analysis.   Given   an
application  $s$  and   a  demand  $\sigma$,  \A\   returns  a  demand
environment that maps expressions of  $s$ to their demands.  The third
parameter to \A, denoted \DT, represents context-independent summaries
of the functions in the program, and will be explained shortly.

Consider the rule for the selector  $\CAR$.  If the demand $\sigma$ on
$(\CAR~x)$ is $\emptyset$, then no part  of the value of $(\CAR~x)$ is
visited  and the  demand  on  $x$ is  also  $\emptyset$.  However,  if
$\sigma$  is  non-empty,  the  context  of  $(\CAR~x)$  has  to  first
dereference the value of $x$ using the $\CAR$\ field and then traverse
the paths represented by $\sigma$. In  this case, the demand on $x$ is
the  set consisting  of  $\epsilon$ (start  at the  root  of $x$)  and
$\acar\sigma$ (dereference  using \CAR\  and then  visit the  paths in
$\sigma$).  On   the  other  hand,   the  rule  for   the  constructor
\CONS\  works   as  follows:   To  traverse  the   path  $\acar\alpha$
(alternately $\acdr\alpha$)  starting from the root  of $(\CONS~x~y)$,
one has to traverse the path $\alpha$ starting from $x$ (or $y$).

Since  $(\mNULLQ~x)$  only visits  the  root  of  $x$ to  examine  the
constructor,  a non-null  demand  on $(\mNULLQ~x)$  translates to  the
demand $\epsilon$ on $x$.  A  similar reasoning also explains the rule
for $(\PRIM~x~y)$.  Since, both $x$ and  $y$ evaluate to integers in a
well typed program,  a non-null demand on  $(\PRIM~x~y)$ translates to
the demand $\epsilon$ on both $x$ and $y$.

The  rule for  a  function  call uses  a  third  parameter $\DT$  that
represents the summaries of all functions in the program.  $\DT$\ is a
set  of   context-independent  summaries,  one  for   each  (function,
parameter)  pair  in  the   program.   $\DT_{\mathit  f}^{\mathit  i}$
represents a transformation that describes  how any demand $\sigma$ on
a call to  $f$ is transformed into the demand  on its $i$th parameter.
$\DT$\ is specified by the  inference rule {\sc demand-summary}.  This
rule gives  a fixed-point property  to be satisfied by  $\DT$, namely,
the demand  transformation assumed  for each  function in  the program
should be  the same as  the demand transformation calculated  from the
body of the function.  Given $\DT$,  the rule for the function call is
obvious.  Notice that the demand  environment for each application $s$
also includes the demand on $s$ itself apart from its sub-expressions.
Operationally,  the  rule {\sc  demand-summary}  is  converted into  a
grammar (Section~\ref{sec:grammar-formulation})  that is parameterized
with  respect  to  a  placeholder  terminal  representing  a  symbolic
demand. The language  generated by this grammar is  the least solution
satisfying  the rule.   The  least solution  corresponds  to the  most
precise slice.

We finally discuss  the rules for expressions  given by $\mathcal{D}$.
The  rules for  $\mRETURN$\ and  $\mIF$\  are obvious.   The rule  for
$(\mLET \ x \leftarrow s \ \mIN \ e)$ first uses $\sigma$ to calculate
the demand environment DE of the  $\mLET$-body $e$.  The demand on $s$
is the union of  the demands on all occurrences of $x$  in $e$.  It is
easy  to see  by  examining the  rules that  the  analysis results  in
demands    that     are    prefix-closed.    More     formally,    let
$\mbox{DE}_{\sigma}$  be the  demand  environment  resulting from  the
analysis of a program for a  demand $\sigma$.  Then, for an expression
$\pi\!\!:\!e$   in   the  program,
$\mbox{DE}_{\sigma}(\pi)$ is prefix closed.

\section{Computing Context-Independent Function Summaries}
\label{sec:grammar-formulation}
A slicing  method used for, say,  debugging needs to be  as precise as
possible to avoid  false errors.  We therefore choose  to analyze each
function call separately  with respect to its calling  context. We now
show how  to obtain  a context-independent  summary for  each function
definition  from  the rule  {\sc  demand-summary}.   Recall that  this
summary is  a function that transforms  any demand on the  result of a
call to demands  on the arguments.  A convenient way  of doing this is
to express how a \emph{symbolic} demand  is transformed by the body of
a function.   Summarizing the function  in this way has  two benefits.
It helps us  to propagate a demand across several  calls to a function
without analyzing  its body each  time.  Even more importantly,  it is
the key to our incremental slicing method.

However, notice  that the rules of  demand analysis requires us  to do
operations that cannot be done on  a symbolic demand. The \CONS\ rule,
for example is defined in terms  of the set $\{\alpha \mid \acar\alpha
\in  \sigma\}$.  Clearly  this  requires  us to  know  the strings  in
$\sigma$.  Similarly, the \mIF\ rule requires to know whether $\sigma$
is  $\emptyset$.   The way  out  is  to  treat these  operations  also
symbolically.  For this we introduce  three new symbols \bcar, \bcdr\/
and  \clazy\/,  to  capture   the  intended  operations.   If  \acar\/
represents  selection using  \CAR, \bcar\/  is intended to
represent  a use as the left argument of  \CONS. Thus  \bcar\acar\/
should  reduce  to the  empty  string  $\epsilon$. Similarly  \clazy\/
represents  the  symbolic transformation  of  any  non-null demand  to
$\epsilon$  and  null  demand  to itself.   These  transformation  are
defined  and  also  made   deterministic  through  the  simplification
function $\mathcal{S}$.
\begin{align*}
\simpl{\epsilonset} &= \epsilonset \\
\simpl{\acar\sigma} &= \acar\simpl{\sigma}\\
\simpl{\acdr\sigma} &= \acdr\simpl{\sigma}\\
\simpl{\bcar\sigma} &= \{\alpha \mid \acar\alpha \in
\simpl{\sigma}\}\\
\simpl{\bcdr\sigma} &= \{\alpha \mid \acdr\alpha \in
\simpl{\sigma}\}\\
\simpl{\clazy\sigma} &=
  \left\{
  \begin{array}{ll}
    \emptyset&\mbox{if}~\simpl{\sigma} = \emptyset\\ \{\epsilon\} &
    \mbox{otherwise}
  \end{array}\right.\\
  \simpl{\sigma_1 \cup \sigma_2} &= \simpl{\sigma_1} \cup \simpl{\sigma_2}
\end{align*}
Notice that  \bcar\  strips the leading  \acar\ from the string following
it, as  required by the rule for \CONS. Similarly, 
\clazy\ examines  the string following  it and replaces it  by 
$\emptyset$  or  $\{\epsilon\}$; this is  required  by  several rules.   The
$\mathcal{A}$  rules for  \CONS\ and  \CAR\ in  terms of  the new
symbols are:
\begin{align*} 
&\A(\pi\!\!:\!(\CONS~\pi_1\!\!:\!x~\pi_2\!\!:\!y),\sigma,\DT)
  = \{\pi_1 \mapsto \bcar\sigma, \pi_2 \mapsto \bcdr\sigma, \pi
  \mapsto \sigma\} ~
  \\ &\A(\pi\!\!:\!(\CAR~\pi_1\!\!:\!x),\sigma,\DT) = \{
  \pi_1 \mapsto \clazy\sigma \cup \acar\sigma, \pi \mapsto \sigma
  \}
\end{align*}
and the $\mathcal{D}$ rule for \SIF\ is:
\begin{align*}
\mathcal{D}(\pi\!\!:\!(\SIF~\pi_1\!\!:\!x~e_1~e_2),\sigma,\DT) &= \mathcal{D}(e_1,
\demand, \DT) \cup \mathcal{D}(e_2, \demand, \DT) ~\cup\\ &~~~~\{
\pi_1 \mapsto \mbox{if}~ \sigma \ne
\emptyset~\mbox{then}~\{\epsilon\}~\mbox{else}~ \emptyset,
\\ &~~~~\pi \mapsto \sigma \}
\end{align*}
The   rules  for   \CDR,  \PRIM\   and  \NULLQ\   are  also   modified
similarly. Now the demand summaries  can be obtained symbolically with
the new  symbols as markers  indicating the operations that  should be
performed string following it.  When the final demand environments are
obtained with the given slicing criterion acting a concrete demand for
the  main  expression  $e_\mainpgm$,  the symbols  \bcar,  \bcdr\  and
\clazy\   are    eliminated   using   the    simplification   function
$\mathcal{S}$.

\subsection{Finding closed-forms for the summaries $\DT$}
Recall  that $\DT_{f}^i$  is a  function that  describes how  the
demand on  a call  to $f$  translates to  its $i$th  argument.  A
straightforward translation  of the {\sc demand-summary}  rule to
obtain $\DT_{f}^i$ is as follows:  For a symbolic demand $\sigma$
compute  the  the  demand  environment  in  $e_f$,  the  body  of
$f$. From this calculate the demand on the $i$th argument of $f$,
say $x$. This  is the union of demands of  all occurrences of $x$
in the body  of $f$. The demand on the  $i$th argument is equated
to $\DT_{f}^i(\sigma)$.  Since the body may  contain other calls,
the demand analysis within $e_f$ makes use of $\DT$ in turn. Thus
our equations  may be recursive. On the whole, $\DT$ corresponds
to a set of equations, one for each argument of each function.
The reader can verify that $\Lf{\linecharcount}{2}{\sigma}$ in
our running example is:
 \begin{eqnarray*}
   \Lf{\linecharcount}{2}{\sigma} &=& \bcar\sigma \cup \clazy
   \Lf{\linecharcount}{2}{\sigma}
 \end{eqnarray*} 

As  noted  in~\cite{reps96},  the   main  difficulty  in  obtaining  a
convenient function  summary is to  find a closed-form  description of
$\Lf{\linecharcount}{2}{\sigma}$    instead     of    the    recursive
specification.   Our solution  to the  problem lies  in the  following
observation: Since we know that the demand rules always prefix symbols
to the argument  demand $\sigma$, we can  write $\Lf{f}{i}{\sigma}$ as
\concatenate{\Lfun{f}{i}}{\sigma},  where $\Lfun{f}{i}$  is  a set  of
strings over the  alphabet $\{\acar, \acdr, \bcar,  \bcdr, \clazy \}$.
The modified equations after doing this substitution will be,
 \begin{eqnarray*}
   \concatenate{\Lfun{\linecharcount}{2}}{\sigma} &=& \bcar\sigma \cup \clazy \concatenate{\Lfun{\linecharcount}{2}}{\sigma}
 \end{eqnarray*}
 Thus, we have,
 \begin{eqnarray*}
   \Lf{\linecharcount}{2}{\sigma} &=&
   \concatenate{\Lfun{\linecharcount}{2}}{\sigma}\\
   \mbox{where }\Lfun{\linecharcount}{2} &=& \{\bcar\}
   \cup \clazy\Lfun{\linecharcount}{2}
 \end{eqnarray*}

\subsection{Computing the demand environment for the function
  bodies} The demand environment for a function body $e_\mathit{f}$ is
calculated  with respect  to a  concrete  demand.  To  start with,  we
consider  the main  expression $e_\mainpgm$  as  being the  body of  a
function $\mainpgm$, The  demand on $e_\mainpgm$ is  the given slicing
criterion. Further,  the concrete  demand on  a function  $f$, denoted
$\sigma_f$, is the union of the demands at all call-sites of $f$.  The
demand  environment  of a  function  body  $e_f$ is  calculated  using
$\sigma_f$. If there is a call to $g$ inside $e_f$, the demand summary
$\DT_g$ is used  to propagate the demand across  the call.  Continuing
with  our example,  the union  of the  demands on  the three  calls to
\linecharcount\ is the slicing criterion.  Therefore the demand on the
expression at program point $\pi_1$ is given by
\begin{eqnarray}
  \begin{array}{rcl}
  \Demand{\pi_1} &=& \Lfun{\linecharcount}{2}{\sigma_{\linecharcount}}\\
  \Lfun{\linecharcount}{2}&=& \{\bcar\} \cup \clazy
   \Lfun{\linecharcount}{2}\\
   \sigma_{\linecharcount} &=& \mbox{slicing criterion}
  \end{array}\label{eqn:rec-dpi1}
 \end{eqnarray}
At the end of this step, we shall have (i) A set of equations
defining the demand summaries $\DT_{\mathit {f}}^{i}$ for each argument of each
function, (ii) Equations specifying the demand $\Demand{\pi}$ at
each program point $\pi$, and (iii) an equation for each
concrete demand $\sigma_f$ on the body of each function $f$.

\subsection{Converting analysis equations to grammars}

Notice that the equations  for \Lfun{\linecharcount}{2} are still
recursive.  However,  Equation~\ref{eqn:rec-dpi1} can  also be
viewed as a  grammar with $\{\acar,\acdr,\bcdr,\bcar,\clazy\}$ as
terminal     symbols      and     \Lfun{\linecharcount}{2},
\Demand{\pi_1}  and  $\sigma_{\linecharcount}$ as  non-terminals.
Thus finding  the solution to  the set of equations  generated by 
the demand analysis reduces to  finding the language generated by
the  corresponding grammar.   The original  equations can  now be
re-written   as    grammar   rules as shown below: 
\begin{align} 
  \label{eqn:grammar-fragment}
  \begin{split}
   \var{\Demand{\pi_1}} &\rightarrow \var{\Lfun{\linecharcount}{2}}\var{\sigma_{\linecharcount}}\\
      \var{\Lfun{\linecharcount}{2}} &\rightarrow \bcar \mid
      \clazy\; \var{\Lfun{\linecharcount}{2}}      
\\
   \var{\sigma_{\linecharcount}} &\rightarrow \mbox{slicing
     criteria}
  \end{split}
\end{align}
 Thus  the question  whether  the expression  at  $\pi_1$ can  be
 sliced  for the  slicing criterion  $\sigma_{\linecharcount}$ is
 equivalent to asking whether the language  $\simpl{\lang{\Demand{\pi_1}}}$ is
 empty.  In fact, the simplification process $\mathcal{S}$ itself
 can  be captured  by  adding  the following  set of  five
 unrestricted  productions named $\mathit{unrestricted}$  and
 adding   the   production   $\var{\Demandp{\pi_1}}   \rightarrow
 \var{\Demand{\pi_1}}\$$ to the grammar generated earlier.
 \begin{eqnarray*} 
   \bcar\acar \rightarrow \epsilon &&
   \bcdr\acdr \rightarrow \epsilon\\
   \clazy\$   \rightarrow \epsilon&&
   \clazy\acar \rightarrow \clazy\\
   \clazy\acdr \rightarrow \clazy
 \end{eqnarray*}
 The set  of five unrestricted productions  shown are independent
 of the  program being sliced  and the slicing criterion.
 The symbol \$ marks the end of a sentence  and is required to capture
 the $\clazy$ rule  correctly.   

 We now generalize: Assume that  $\pi$ is the program point associated
 with  an expression  $e$.  Given  a slicing  criterion $\sigma$,  let
 $G_{\pi}^\sigma$  denote  the  grammar $(N,~T,~  P_{\pi}^\sigma  \cup
 \mathit{unrestricted}\cup         \{\Demandp{\pi}         \rightarrow
 \Demand{\pi}\$\},~  \var{\Demandp{\pi}})$.  Here  $T$ is  the set  of
 terminals   $\{\acar,   \acdr,   \bcar,  \bcdr,   \clazy,   \$   \}$,
 $P_{\pi}^\sigma$  is the  set  of  context-free productions  defining
 $\Demand{\pi}$,  the  demand  on   $e$  (as  illustrated  by  example
 \ref{eqn:grammar-fragment}).   $N$  contains the non-terminals  of
 $P_{\pi}^\sigma$ and additionally includes the special non-terminal
 $\Demandp{\pi}$.
   As mentioned earlier, given  a slicing criterion $\sigma$, the
   question of  whether the expression  $e$ can be sliced  out of
   the  containing program  is equivalent  to asking  whether the
   language $\lang{G_{\pi}^\sigma}$ is empty.   We shall now show
   that this problem is undecidable.   

   \begin{theorem}
     Given a program point  $\pi$ and slicing criterion $\sigma$,
     the  problem  whether  $\lang{G_{\pi}^\sigma}$ is  empty  is
     undecidable.
\end{theorem}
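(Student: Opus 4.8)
The plan is to prove undecidability by reduction from a classical undecidable problem, exploiting the fact that the two cancellation productions $\bcar\acar\rightarrow\epsilon$ and $\bcdr\acdr\rightarrow\epsilon$ behave exactly like the matching rules of a Dyck language over two bracket types: reading $\bcar,\bcdr$ as opening brackets and $\acar,\acdr$ as their corresponding closing brackets, a string over these four symbols reduces to $\epsilon$ precisely when it is well-nested. Layered on top of the context-free productions $P_\pi^\sigma$, this matching capability gives the combined grammar the power to test equality of two context-free--generated strings, which is exactly the hallmark of context-free language intersection. Since emptiness of the intersection of two context-free languages is undecidable (itself provable via the Post Correspondence Problem), I would reduce from it. Intuitively the context-free part supplies one stack and the Dyck cancellation supplies a second, and two stacks suffice to simulate a Turing machine, so undecidability is to be expected.

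Concretely, given two context-free languages $L_1,L_2$ over an alphabet $\Sigma$, I would fix for each symbol $a\in\Sigma$ a closing code $c(a)\in\{\acar,\acdr\}^{+}$ (a fixed-length binary encoding) and a matching opening code $o(a)\in\{\bcar,\bcdr\}^{+}$, arranged so that $o(a)\,c(b)$ Dyck-reduces to $\epsilon$ iff $a=b$. I would then build context-free productions generating $\overline{x}\cdot\tilde{z}$, where $\overline{x}$ is the opener image of a word $x\in L_1$ and $\tilde{z}$ is the closer image of a word $z\in L_2$ written in reverse; since concatenation, homomorphism and reversal all preserve context-freeness, this is again a context-free language. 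By the nesting behaviour of the cancellation rules, $\overline{x}\,\tilde{z}$ reduces to the empty access path $\epsilon$ iff $x=z$, so some generated string simplifies to a marker-free path iff $L_1\cap L_2\neq\emptyset$. The final step is to realise these productions as the demand equations of an actual program: a function building structure with \CONS\ to emit the opener symbols and a function deconstructing with \CAR\ and \CDR\ to emit the closer symbols, so that $P_\pi^\sigma$ for a chosen point $\pi$ and a trivial criterion such as $\{\epsilon\}$ is exactly the grammar above. Invoking the equivalence established earlier between emptiness of $\lang{G_\pi^\sigma}$ and emptiness of $\simpl{\lang{\Demand{\pi}}}$, this makes $\lang{G_\pi^\sigma}$ non-empty iff the Post Correspondence instance has a solution, which settles undecidability.

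The hard part will be the faithfulness of this encoding under simplification, together with its realisability as a genuine program. For realisability I must confront the fact that the only rules producing the bare selectors $\acar,\acdr$ (namely \CAR\ and \CDR) unavoidably also introduce the $\clazy$ marker as an alternative, so the demand grammar will contain $\clazy$-branches that cannot be suppressed syntactically. I therefore have to argue that such branches never simplify to a marker-free access path---because $\clazy$ absorbs only a pure suffix up to the terminator \$\ and leaves the opener block uncancelled---so that they contribute nothing to $\lang{G_\pi^\sigma}$. Dually, I must design $c(\cdot)$ and $o(\cdot)$ so that any length or content mismatch between $\overline{x}$ and $\tilde{z}$ always strands an \emph{opening} symbol, rather than leaving a spurious tail of closers, since closers are themselves legal access paths and would wrongly make the language non-empty. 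Pinning down this exact correspondence---that a derived string simplifies to a marker-free path if and only if the two encoded words coincide---is the crux of the argument; once it holds, the reduction, and hence undecidability, follow immediately.
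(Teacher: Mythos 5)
Your proposal is correct in strategy but takes a genuinely different route from the paper. The paper first exploits prefix-closedness of demands to reduce emptiness of $\lang{G_{\pi}^{\sigma}}$ to the membership question of whether $\$$ is derivable, and then reduces \emph{directly from the halting problem}: given a Turing machine $M$ and input $\alpha$, it constructs context-free productions $P$ so that $P$ together with the $\mathit{unrestricted}$ productions accepts $\$$ iff $M$ halts on $\alpha$, and finally argues that $P$ can be replaced by productions genuinely arising from demand analysis of some program---with both constructions inherited from Lemmas B.2 and B.3 of the cited prior work. You instead reduce from emptiness of the intersection of two context-free languages (undecidable via PCP), reading $\bcar\acar\rightarrow\epsilon$ and $\bcdr\acdr\rightarrow\epsilon$ as Dyck-style bracket matching that tests equality of an opener-encoded word of $L_1$ against a reversed closer-encoded word of $L_2$. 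What your route buys is a self-contained, more elementary argument: no computation histories, no appeal to external lemmas, and no need for the $\$$-membership detour, since your encoding controls emptiness directly. What the paper's route buys is brevity, precisely because the two hard steps (grammar construction and program realizability) are outsourced; those same two steps are where your proposal still has work to do.

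On those remaining steps: the two difficulties you flag are real, and both are closable, so they do not invalidate the approach. The spurious-closer-residue problem is handled by the standard endmarker trick---prefix every word of $L_1$ and $L_2$ with a fresh symbol $\#$ whose code occurs nowhere else; then any length or content mismatch forces some barred symbol to face a non-matching selector, leaving an uncancellable $\bcar$ or $\bcdr$, and $\simpl{\cdot}$ of such a string is $\emptyset$. The same observation disposes of the $\clazy$ branches: a string truncated by $\clazy$ has the form (openers)(partial closers), and with the endmarker in place the partial closers can never cancel all the openers unless a full match between an $L_1$-word and an $L_2$-word already exists, so $\clazy$ branches contribute nothing new to $\lang{G_{\pi}^{\sigma}}$. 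Completing your proof then requires only the routine check that context-free productions (up to unavoidable $\clazy$ alternatives from the \CAR/\CDR\ rules) are realizable as demand equations---one function per nonterminal, function composition for concatenation, \SIF\ for union, \CONS\ emitting openers and \CAR/\CDR\ emitting closers---which is exactly the analogue of the realizability lemma the paper cites rather than proves.
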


   \begin{proofoutline}  
     Recollect that the  set of demands on an  expression, as obtained
     by  our analysis,  is  prefix closed.  Since  the grammar  always
     includes       production      $\{\Demandp{\pi}       \rightarrow
     \Demand{\pi}\$\}$,  $\lang{G_{\pi}^\sigma}$ is  non-empty if  and
     only if it contains $\$$ (i.e.  empty string followed by the $\$$
     symbol).  We therefore  have to show that  the equivalent problem
     of  whether   $\$$   belongs  to  $\lang{G_{\pi}^\sigma}$  is
     undecidable.

     Given a Turing machine and a string $\alpha \in (0+1)^\ast$,
     the  proof   involves  construction   of  a  grammar   $G  =
     (N\cup\lbrace S, S'\rbrace,  T, P \cup \mathit{unrestricted}
     \cup \{S' \rightarrow S\$\}, S')$ with the property that the
     Turing machine halts on $\alpha$  if and only if $G$ accepts
     $\$$.   Notice  that  $P$  is a  set  of  context-free
     productions  over   the  terminal   set  $T$  and   may  not
     necessarily  be  obtainable  from   demand  analysis  of  a
     program.  However,  $G$ can be  used to construct  a program
     whose demand  analysis results  in a  grammar $G'$  that can
     used  instead  of $G$  to  replay  the earlier  proof.   The
     details   can  be   found   in  Lemmas   B.2   and  B.3   of
     \cite{lazyliveness}.
   \end{proofoutline}

\begin{figure}[t]
  \centering
  \includegraphics[width=.6\textwidth]{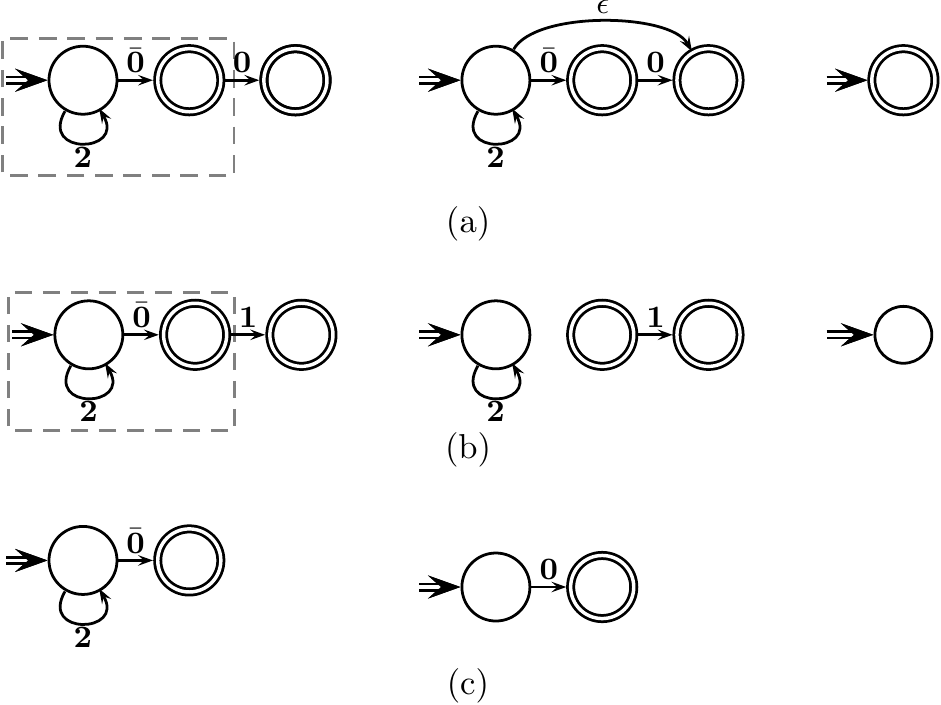}
  \caption{(a) \& (b) show the simplification of the automaton $\automaton{M}{\pi_1}{\sigma}$ for the slicing criteria $\sigma = \{\epsilon, \acar\}$ and $\sigma = \{\epsilon, \acdr\}$ respectively. (c) shows the canonical automaton $\grammar{A}{\pi_{1}}{}$ and the corresponding completing automaton $\compautomaton{A}{\pi_1}$}
  \label{fig:nfa}
\end{figure}
 
We get around  the problem of undecidability, we use  the technique of
Mohri-Nederhoff~\cite{mohri00regular}        to       over-approximate
$\grammar{P}{\pi}{\sigma}$  by a  strongly regular  grammar.  The  NFA
corresponding     to     this      automaton     is     denoted     as
$\automaton{M}{\pi}{\sigma}$.  The simplification rules can be applied
on  $\automaton{M}{\pi}{\sigma}$ without  any loss  of precision.  The
details of the simplification process are in~\cite{karkare07liveness}.

 For our running example, the grammar after demand analysis is
 already regular, and thus remains unchanged by Mohri-Nederhoff
 transformation.  The automata in Figures~\ref{fig:nfa}(a)
 and~\ref{fig:nfa}(b) correspond to the two slicing criteria
 $\sigma_{\linecharcount} = \{\epsilon, \acar\}$ and
 $\sigma_{\linecharcount} = \{\epsilon, \acdr\}$ and illustrate the 
 simplification of corresponding Mohri-Nederhoff automata $\automaton{M}{\pi_1}{\sigma_{\linecharcount}}$. It can be seen that, when the slicing
 criterion is $\{\epsilon, \acdr\}$, the language of
 $\Demand{\pi_1}$ is empty and hence $e$ can be sliced away.
 A drawback of the method outlined above  is that with a change in the
 slicing  criterion,   the  entire  process  of   grammar  generation,
 Mohri-Nederhoff   approximation   and   simplification  has   to   be
 repeated. This is likely to be inefficient for large programs.

\section{Incremental Slicing}  \label{sec:incr-analysis}
We now present an incremental algorithm which avoids the repetition of
computation when the  same program is sliced  with different criteria.
This can  be done by pre-computing  the part of the  slice computation
that is independent  of the slicing criterion.   The pre-computed part
can then be used efficiently to  slice the program for a given slicing
criterion.

In general, the pre-computation consists of three steps: (i) computing
the  demand  at  each  expression $\pi\!: e$  for  the  \emph{fixed}  slicing
criterion $\{\epsilon\}$ and applying the Mohri-Nederhoff procedure to
yield the automaton $\grammar{M}{\pi}{\{\epsilon\}}$, (ii) a step called
\emph{canonicalization}  which  applies  the simplification  rules  on
$\grammar{M}{\pi}{\{\epsilon\}}$ until the \bcar\  and \bcdr\ symbols in
the strings accepted  by the resulting automaton are only  at the end,
and,   from  this   (iii)   constructing  an   automaton  called   the
\emph{completing automaton}.  For the running example, the canonicalized and the completing
automata are shown
Figures~\ref{fig:nfa}(c).   We explain these
steps now.

  As          stated          earlier,          the          automaton
  $\grammar{M}{\pi_{1}}{\{\epsilon\}}$,  after  some  simplifications,
  gives the first automaton (the canonicalized automaton) shown in  Figure~\ref{fig:nfa}(c), which  we shall
  denote    $\grammar{A}{\pi_{1}}{}$.     It    is   clear    that    if
  $\grammar{A}{\pi_{1}}{}$  is concatenated  with a  slicing criterion
  that starts with the symbol $\acar$, the result, after simplification,
  will be non-empty.   We call a string that starts  with $\acar$ as a
  \emph{completing string} for $\grammar{A}{\pi_{1}}{}$. In this case,
  detecting a completing string was easy  because  all  strings  accepted  by
  $\grammar{A}{\pi_{1}}{}$ end  with $\bcar$.   Now consider  the second automaton in Figure~\ref{fig:nfa}(c),
  called the \emph{completing automaton}, that recognizes the language
  $\acar(\acar+\acdr)^\ast$.  This automaton recognizes all completing
  strings  for  $\grammar{A}{\pi_{1}}{}$  and  nothing else.   Thus  for  an
  arbitrary  slicing  criterion  $\sigma$, it  suffices  to  intersect
  $\sigma$  with  the  completing  automaton  to  decide  whether  the
  expression at $\pi_1$ will  be in the  slice. In fact,  it is
  enough for the  completing automaton to recognize  just the language
  $\{\acar\}$ instead of  $\acar(\acar+\acdr)^\ast$.  The reason  is that
  any slicing criterion, say $\sigma$, is prefix closed, and therefore
  $\sigma  \cap  \{\acar\}$ is  empty  if  and  only if  $\sigma  \cap
  \acar(\acar+\acdr)^\ast$  is   empty.   Our   incremental  algorithm
  generalizes this reasoning.

\subsection{Completing Automaton and Slicing}
For constructing the completing  automaton for an expression $e$,
we  saw that  it would  be convenient  to simplify  the automaton
$\grammar{M}{e}{\{\epsilon\}}$ to an extent  that all  accepted strings, after
simplification,  have  $\bcar$   and $\bcdr$  symbols
only at the end. We now give a set of rules, denoted by $\canon$,
that captures this simplification. 
\begin{align*}
\canonical{\epsilonset} &= \epsilonset\\
\canonical{\acar\sigma} &= \acar\canonical{\sigma} \\
\canonical{\acdr\sigma} &= \acdr\canonical{\sigma} \\
\canonical{\bcar\sigma} &= 
    \{\bcar \mid \epsilonset = \canonical{\sigma}\} \cup \{\alpha \mid \acar\alpha \in \canonical{\sigma}\} \\
  & \cup \{\bcar\bcdr\alpha \mid \bcdr\alpha \in \canonical{\sigma}\} \cup \{\bcar\bcar\alpha \mid \bcar\alpha \in \canonical{\sigma}\} \\
  \canonical{\bcdr\sigma} &=  
    \{\bcdr \mid \epsilonset = \canonical{\sigma}\} \cup   \{\alpha \mid \acdr\alpha \in \canonical{\sigma}\} \\
  & \cup \{\bcdr\bcdr\alpha \mid \bcdr\alpha \in \canonical{\sigma}\} \cup \{\bcdr\bcar\alpha \mid \bcar\alpha \in \canonical{\sigma}\} \\
    \canonical{\clazy\sigma} &= \clazy\canonical{\sigma} \\
    \canonical{\sigma_1 \cup \sigma_2} &= \canonical{\sigma_1} \cup \canonical{\sigma_2}
\end{align*}
$\canon$ differs from $\Simpl$  in that it accumulates continuous
run of  \bcar\ and \bcdr\  at the end  of a string.   Notice that
$\canon$,  like $\Simpl$,  simplifies its  input string  from the
right.    Here  is   an  example   of  $\canon$   simplification:
  \[\acdr\clazy\bcar\acar\acar\clazy\acar\bcdr\bcdr\acdr\bcar \csimpl
  \acdr\clazy\bcar\acar\acar\clazy\acar\bcdr\bcar \csimpl
  \acdr\clazy\acar\clazy\acar\bcdr\bcar \]
In contrast the simplification of the same string using
$\Simpl$ gives:
\[\acdr\clazy\bcar\acar\acar\clazy\acar\bcdr\bcdr\acdr\bcar
\ssimpl \acdr\clazy\bcar\acar\acar\clazy\acar\bcdr\bcdr\acdr\emptyset
\ssimpl
\acdr\clazy\bcar\acar\acar\clazy\acar\bcdr\bcar\emptyset
\ssimpl \ldots \ssimpl \emptyset
\] 

$\canon$ satisfies two important properties:
\begin{property} \label{prop:canon-1}
The result of $\canon$ always has the form $(\acar + \acdr +
\clazy)^\ast(\bcar + \bcdr)^\ast$.  Further,   if  $\sigma   \subseteq
  (\acar+\acdr+\clazy)^\ast$, then $\canonical{\sigma} = \sigma$.
\end{property}
\begin{property}\label{prop:canon-2}
  $\Simpl$      subsumes      $\canon$, i.e., 
  $\simpl{\canonical{\sigma_1}\canonical{\sigma_2}}             =
  \simpl{\sigma_1\sigma_2}$. 
\end{property}
Note that while we have defined canonicalization over a language,
the    actual    canonicalization    takes    place    over    an
automaton---specifically    the   automaton    $\grammar{M}{\pi}{}$
obtained after the Mohri-Nederhoff transformation.
The       function       {\bf      createCompletingAutomaton}       in
Algorithm~\ref{algo:comp-automata} takes  $\grammar{A}{\pi}{}$, the
  canonicalized  Mohri-Nederhoff automaton  for the  slicing criterion
  $\epsilonset$,  as  input, and  constructs the  completing automaton,
  denoted as $\compautomaton{A}{\pi}$.

   \SetAlCapSkip{0.5em}
\begin{algorithm}[t!]
  \caption{Functions  to create the completing automaton and the slicing function.\label{algo:comp-automata}}
  \DontPrintSemicolon\AlgoDontDisplayBlockMarkers\SetAlgoNoLine
  \SetAlgoNoEnd
  \SetKwData{frSet}{\ensuremath{F_{\rm fr}}}
  \SetKwData{Acomp}{\ensuremath{\compautomaton{A}{}}}
  \SetKwFunction{CreateCompletingAutomaton}{\bf createCompletingAutomaton}
  \SetKwProg{Fn}{Function}{}{}
  \SetKwData{hasBarFreePath}{hasBarFreeTransition}
  \AlgoDontDisplayBlockMarkers
  \SetKwFunction{Inslice}{\bf inSlice}
  \SetKwData{exp}{e}\SetKwData{sc}{$\sigma$}
  \Fn{\CreateCompletingAutomaton{$A$}}{
    \KwData{The Canonicalized Automaton $A = \left\langle Q, \{\acar, \acdr, \bcar, \bcdr, \clazy\}, \delta, q_0, F\right\rangle$}
    \KwResult{$\Acomp$, the completing automaton for $A$}

    $F'\leftarrow \{q_{\rm fr} \mid q_{\rm fr} \in Q,\, \hasBarFreePath(q_0, q_{\rm fr}, \delta)\}$\; 

  \tcc{\dotfill Reverse the ``bar'' transitions: directions as
    well as labels \dotfill} 
    \ForEach{(transition $\delta(q, \bcar) \rightarrow q'$)}{
      add transition $\delta'(q', \acar) \rightarrow q$
    }
    \ForEach{(transition $\delta(q, \bcdr) \rightarrow q'$)}{
      add transition $\delta'(q', \acdr) \rightarrow q$
    }
    $q'_0\leftarrow$ new state \tcc{\dotfill start state of \Acomp \dotfill}
    \ForEach{(state $q \in F$) }{
      add transition $\delta'(q'_0, \epsilon) \rightarrow q$
    }
    
    \KwRet{$\left\langle Q\cup\{q'_0\}, \{\acar, \acdr\}, \delta', q'_0, F'\right\rangle$}
  }
  \Fn{\Inslice{\exp, \sc}}{
   \KwData{expression \exp, slicing criteria \sc}
    \KwResult{Decides whether \exp should be retained in slice}
    \KwRet{$(\lang{\compautomaton{A}{\exp}} \cap\ \sc \neq \emptyset)$}
  }
\end{algorithm}


Recollect that the strings recognized by $\grammar{A}{\pi}{}$ are of the
form  $(\acar  +  \acdr  +  \clazy)^\ast(\bcar  +  \bcdr)^\ast$.   The
algorithm first  computes the set  of states reachable from  the start
state using only  edges with labels $\{\acar,  \acdr, \clazy\}$.  This
set  is  called the  {\em  frontier  set}.   It then  complements  the
automaton  and  drops  all   edges  with  $\{\acar,  \acdr,  \clazy\}$
labels. Finally,  all states in the  {\em frontier set} are  marked as
final states.
Since \automaton{A}{\pi}{} is independent
of the slicing  criteria, the {\em completing  automaton} is also
independent of the slicing criteria and needs to be computed only
once. It can be stored and  re-used whenever the program needs to
be  sliced.   To  decide  whether  $\pi\!\!:e$ can  be  sliced  out,  the
function {\bf inSlice}       described      in
Algorithm~\ref{algo:comp-automata}   just   checks   if   the
intersection      of      the     slicing      criteria      with
\lang{\compautomaton{A}{\pi}} is null.

 \subsection{Correctness of Incremental Slicing}
  \label{sec:incremental-proofs}

We now show that the incremental algorithm to compute incremental
slices  is correct.  Recall that we use the following notations: (i)  \grammar{G}{\pi}{\sigma}  is the  grammar
generated         by          demand         analysis
(Figure~\ref{fig:demand-analysis})  for an  expression  $\pi\!\!: e$ in
the   program of interest, when the slicing   criteria is
$\sigma$, (ii) \automaton{A}{\pi}{}  is   the  automaton   corresponding  to
\grammar{G}{\pi}{\epsilonset}  after Mohri-Nederhoff  transformation and
canonicalization, and (iii) \compautomaton{A}{\pi}
is the completing automaton   for $e$.
We  first show  that the  result of  the demand  analysis for  an
arbitrary slicing  criterion $\sigma$ can be decomposed as  the
concatenation  of the  demand analysis  obtained for  the fixed  slicing
criterion      $\{\epsilon\}$      and      $\sigma$      itself.


\begin{lemma}
  \label{thm:grammar-eq}
For   all  expressions   $e$  and   slicing  criteria   $\sigma$,
\lang{\grammar{G}{\pi}{\sigma}}                                   =
\concatenate{\lang{\grammar{G}{\pi}{\epsilonset}}}{\sigma}.
\end{lemma}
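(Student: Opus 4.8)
The plan is to exploit the fact that the slicing criterion enters the demand-analysis grammar at a single, syntactically controlled position. Recall from Section~\ref{sec:grammar-formulation} that the grammar defining $\Demand{\pi}$ is built from two kinds of nonterminals: the \emph{summary} nonterminals $\Lfun{f}{i}$, whose productions range only over the alphabet $\{\acar,\acdr,\bcar,\bcdr,\clazy\}$ and reference only other summary nonterminals, and the \emph{demand} nonterminals (the $\Demand{\pi}$ together with the concrete demands $\sigma_f$), whose productions come from the rules of Figure~\ref{fig:demand-analysis}. The slicing criterion $\sigma$ is plugged in at exactly one place, namely the production $\sigma_{\mainpgm}\rightarrow\sigma$ for the concrete demand on the main expression; every other production is identical in $\grammar{G}{\pi}{\sigma}$ and $\grammar{G}{\pi}{\epsilonset}$. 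I will write $\Sigma$ for this distinguished nonterminal $\sigma_{\mainpgm}$.

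First I would make precise the structural invariant underlying the result: in the \emph{skeleton} grammar obtained by treating $\Sigma$ as a single terminal endmarker, every string derivable from a summary nonterminal $\Lfun{f}{i}$ contains no occurrence of $\Sigma$, while every string derivable from a demand nonterminal (in particular each $\Demand{\pi}$) contains exactly one occurrence of $\Sigma$, located at its right end. I would prove this by induction on the length of the derivation, checking that each production form preserves it. The summary equations reference only summary nonterminals and bar-alphabet symbols, so they never introduce $\Sigma$. For the demand nonterminals, the rules of Figure~\ref{fig:demand-analysis} always either map an expression to its incoming demand unchanged ($\pi\mapsto\sigma$), or prepend fixed bar-alphabet material — $\bcar$, $\bcdr$, $\acar\sigma\cup\clazy\sigma$, and so on — or a summary $\Lfun{f}{i}$ (from the function-call rule, using the closed form $\Lf{f}{i}{\sigma}=\concatenate{\Lfun{f}{i}}{\sigma}$) to an incoming demand. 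Prepending $\Sigma$-free material to a string with a single trailing $\Sigma$ again yields a string with a single trailing $\Sigma$; and since union and the $\mLET$-rule (whose demand on $s$ is a union of the demands at the occurrences of the bound variable inside the body) preserve the property string-by-string, the invariant is maintained.

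With the invariant in hand, the lemma follows from a substitution argument. The grammars $\grammar{G}{\pi}{\sigma}$ and $\grammar{G}{\pi}{\epsilonset}$ are the two instances of the skeleton obtained by expanding the single terminal $\Sigma$ into the language $\sigma$ and into $\epsilonset$, respectively. Because every string derivable from $\Demand{\pi}$ in the skeleton has the form $w\,\Sigma$ with $w$ over the bar alphabet, expanding $\Sigma$ by $\epsilonset$ yields exactly the prefixes $w$, so $\lang{\grammar{G}{\pi}{\epsilonset}}=\{\,w \mid w\,\Sigma \text{ is derivable}\,\}$, whereas expanding $\Sigma$ by $\sigma$ replaces the trailing marker by an arbitrary $s\in\sigma$, giving $\{\,w\,s \mid w\,\Sigma\text{ derivable},\ s\in\sigma\,\}$. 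These two sets are precisely $\lang{\grammar{G}{\pi}{\epsilonset}}$ and $\concatenate{\lang{\grammar{G}{\pi}{\epsilonset}}}{\sigma}$. Formally I would phrase this as a bijection on derivations: any derivation of $z$ from $\Demand{\pi}$ in $\grammar{G}{\pi}{\sigma}$ factors uniquely as a skeleton derivation of $w\,\Sigma$ followed by an independent derivation of some $s\in\sigma$ from the unique occurrence of $\Sigma$, and the same skeleton derivation underlies the derivation of $w$ in $\grammar{G}{\pi}{\epsilonset}$.

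I expect the main obstacle to be the first step — pinning down the structural invariant rigorously in the presence of recursion and mutual recursion among the demand nonterminals. The delicate cases are the function-call rule, where I must use $\Lf{f}{i}{\sigma}=\concatenate{\Lfun{f}{i}}{\sigma}$ to guarantee that each summary contributes only $\Sigma$-free material as a prefix, and the $\mLET$-rule, where the incoming demand on the application $s$ is itself a union of demands computed deeper in the body. The induction on derivation length is exactly what lets me discharge these without circularity. Once the ``single trailing $\Sigma$'' property is secured, the concatenation identity is essentially bookkeeping.
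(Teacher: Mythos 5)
Your proposal is correct and takes essentially the same route as the paper's own proof: both hinge on the observation that every demand-analysis rule only prepends criterion-independent material to the incoming demand, so the slicing criterion can occur only as a suffix, the prefix language is the same for every criterion, and substituting \epsilonset\ identifies that prefix language as \lang{\grammar{G}{\pi}{\epsilonset}}. The paper presents this as a brief structural induction on $e$, whereas your skeleton-grammar formalization (single marker nonterminal, induction on derivations, substitution/bijection on derivations) is a more rigorous rendering of the very same decomposition rather than a different argument.
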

  
\begin{proof}
 The proof  is by  induction on the  structure of  $e$. Observe
   that    all    the    rules    of    the    demand    analysis
   (Figure~\ref{fig:demand-analysis})   add   symbols   only   as
   prefixes to the incoming  demand.  Hence, the slicing criteria
   will always appear  as a suffix of any string that is   produced by the
   grammar.  Thus, any  grammar \lang{\grammar{G}{\pi}{\sigma}} can
   be  decomposed  as  $\concatenate{\sigma'}{\sigma}$  for  some
   language $\sigma'$.  Substituting $\epsilonset$  for $\sigma$,
   we   get   $\grammar{G}{\pi}{\epsilonset}  =   \sigma'$.    Thus
   \lang{\grammar{G}{\pi}{\sigma}}                                =
   \concatenate{\lang{\grammar{G}{\pi}{\epsilonset}}}{\sigma}.
\end{proof}
Given a string $s$ over $(\bcar+\bcdr)^\ast$, we use the notation $\overline{s}$ to stand for the reverse of $s$ in which all occurrences of \bcar\ are replaced by \acar\ and \bcdr\ replaced by \acdr. Clearly, $\simpl{\{s\overline{s}\}} = \epsilonset$.
 
We next prove the completeness and minimality of \compautomaton{A}{\pi}.
\begin{lemma}
  \label{thm:aecomp}
$\{s\mid\Simpl(\lang{\automaton{M}{\pi}{\{s\}}}) \neq\ \emptyset\}
  = \concatenate{\lang{\compautomaton{A}{\pi}}}{(\acar + \acdr)^\ast} $
\begin{proof}
  We first prove  $LHS\ \subseteq\ RHS$.  Let  the string $s
  \in   \Simpl(\lang{\automaton{M}{\pi}{\{s\}}})$.   Then   by
  Lemma~\ref{thm:grammar-eq},             $s             \in
  \Simpl(\concatenate{\lang{\automaton{M}{\pi}{\{\epsilon\}}}}{\{s\}})$.
  By Property~\ref{prop:canon-2},  this also means  that $s
   \in
  \Simpl(\concatenate{\canonical{\lang{\automaton{M}{\pi}{\{\epsilon\}}}}}{\{s\}})$. Since
  strings                                                 in
  $\canonical{\lang{\automaton{M}{\pi}{\{\epsilon\}}}}$ are of
  the   form  $(\acar   +  \acdr   +  \clazy)^\ast(\bcar   +
  \bcdr))^\ast$  (Property~\ref{prop:canon-1}),  this  means
  that there is a  string $\concatenate{p_1}{p_2}$ such that
  $p_1 \in  (\acar + \acdr + \clazy)^\ast$  and $p_2 
  \in  (\bcar  +  \bcdr)^\ast$,  and  $\Simpl(\{p_2\}\{s\})\
  \subseteq  (\acar +  \acdr)^\ast$. Thus  $s$ can  be split
  into   two   strings   $s_1$    and   $s_2$,   such   that
  $\Simpl(\concatenate{\{p_2\}}{\{s_1\}})   =  \epsilonset$.
  Therefore $s_1  = \overline{p_2}$.  From  the construction
  of  \compautomaton{A}{\pi}   we  have   $\overline{p_2}  \in
  \lang{\compautomaton{A}{\pi}}$  and  $s_2\   \in\  (\acar  +
  \acdr)^\ast$.            Thus,           $s\           \in
  \concatenate{\lang{\compautomaton{A}{\pi}}}{(\acar         +
    \acdr)^\ast}$.

  Conversely,  for the  proof  of $RHS\  \subseteq\ LHS$,  we
  assume       that       a        string       $s       \in
  \concatenate{\lang{\compautomaton{A}{\pi}}}{(\acar         +
    \acdr)^\ast}$.      From     the     construction     of
  \compautomaton{A}{\pi} we  have strings $p_1, p_2,  s'$ such
  that                      $p_1p_2                      \in
  \canonical{\lang{\automaton{M}{\pi}{\epsilon}}}$,  $p_1  \in
  (\acar  + \acdr  +  \clazy)^\ast$,  $p_2  \in (\bcar  +
  \bcdr)^\ast$, $s$ is  $\overline{p_2}s'$ and $s'\in (\acar
  +                   \acdr)^\ast$.                    Thus,
  $\Simpl(\lang{\automaton{M}{\pi}{\{s\}}})                  =
  \Simpl(\lang{\concatenate{\automaton{M}{\pi}{\epsilonset}}{\{s\}}})
  =\Simpl(\concatenate{\canonical{\lang{\automaton{M}{\pi}{\{\epsilon\}}}}{\{s\}}}) =
  \Simpl(\{\concatenate{p_1}{\concatenate{p_2}{\concatenate{\overline{p_2}}{s'}}}\})
  =                    \{p_1s'\}$.                     Thus,
  $\Simpl(\lang{\automaton{M}{\pi}{\{s\}}})$ is  non-empty and
  $s \in LHS$.
\end{proof}

\end{lemma}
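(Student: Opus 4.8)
The plan is to prove the set equality by the two inclusions separately, in each case collapsing the varying criterion $\{s\}$ onto the fixed criterion $\{\epsilon\}$ with Lemma~\ref{thm:grammar-eq} and then normalising the resulting language with $\canon$ so that each string factors cleanly into a bar-free prefix and a trailing run of bars. Concretely, for any $s$ I would first rewrite $\Simpl(\lang{\automaton{M}{\pi}{\{s\}}})$ as $\Simpl(\concatenate{\lang{\automaton{M}{\pi}{\{\epsilon\}}}}{\{s\}})$ using Lemma~\ref{thm:grammar-eq}, and then as $\Simpl(\concatenate{\canonical{\lang{\automaton{M}{\pi}{\{\epsilon\}}}}}{\{s\}})$ using Property~\ref{prop:canon-2}. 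By Property~\ref{prop:canon-1} every string of the canonical language has the shape $p_1 p_2$ with $p_1 \in (\acar+\acdr+\clazy)^\ast$ and $p_2 \in (\bcar+\bcdr)^\ast$, so the entire emptiness question is governed by how the bar-run $p_2$ interacts with the leading symbols of $s$.

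For the inclusion $\mathit{LHS}\subseteq\mathit{RHS}$ I would argue that non-emptiness forces the whole bar-run to cancel. Non-emptiness of the left side means some canonical $p_1 p_2$ has $\Simpl(\{p_1 p_2 s\})\neq\emptyset$; since $p_1$ contains only $\acar,\acdr,\clazy$, unfolding the definition of $\Simpl$ shows that prefixing $p_1$ preserves non-emptiness in both directions, so $\Simpl(\{p_2 s\})\neq\emptyset$. Unfolding $\Simpl$ from the left, each leading bar of $p_2$ must be matched and stripped by the first still-unconsumed symbol of the simplified tail; as $s\in(\acar+\acdr)^\ast$ carries no bars, this can happen only if $s$ begins with $\overline{p_2}$, giving $s = \overline{p_2}\,s_2$ with $s_2\in(\acar+\acdr)^\ast$ and $\Simpl(\{p_2\overline{p_2}\}) = \epsilonset$. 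It then remains to observe that the completing-automaton construction places exactly $\overline{p_2}$ in $\lang{\compautomaton{A}{\pi}}$: an accepting run for $p_1 p_2$ in $\grammar{A}{\pi}{}$ reaches a frontier state $q_{\rm fr}$ after the bar-free prefix $p_1$ and ends at an original final state $q_f$ after reading $p_2$; reversing the bar transitions and relabelling them ($\bcar\mapsto\acar$, $\bcdr\mapsto\acdr$) turns this into $q_f \xrightarrow{\overline{p_2}} q_{\rm fr}$, which, prefixed by the $\epsilon$-edge from the new start state to $q_f$ and terminated at the now-accepting frontier state $q_{\rm fr}$, accepts $\overline{p_2}$. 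Hence $s = \overline{p_2}\,s_2 \in \concatenate{\lang{\compautomaton{A}{\pi}}}{(\acar+\acdr)^\ast}$.

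For the converse $\mathit{RHS}\subseteq\mathit{LHS}$ I would run the same correspondence backwards. Any $s\in\concatenate{\lang{\compautomaton{A}{\pi}}}{(\acar+\acdr)^\ast}$ decomposes as $\overline{p_2}\,s'$, where $\overline{p_2}$ labels an accepting run of $\compautomaton{A}{\pi}$ from some $q_f\in F$ to a frontier state and $s'\in(\acar+\acdr)^\ast$; by the construction there is then a bar-free path $p_1$ from the start state to that frontier state and a bar path $p_2$ from it to $q_f$, so $p_1 p_2 \in \canonical{\lang{\automaton{M}{\pi}{\{\epsilon\}}}}$. Feeding this witness back through Lemma~\ref{thm:grammar-eq} and Property~\ref{prop:canon-2}, I get that $\Simpl(\lang{\automaton{M}{\pi}{\{s\}}})$ contains $\Simpl(\{p_1 p_2 \overline{p_2} s'\})$; since $p_2\overline{p_2}$ simplifies to $\epsilonset$ this reduces to $\Simpl(\{p_1 s'\})$, which is non-empty because $p_1 s'$ carries no bar symbols and therefore can never simplify to $\emptyset$. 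Thus $s\in\mathit{LHS}$.

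The step I expect to be the real obstacle is the automaton-level bridge used in both directions: proving that $\lang{\compautomaton{A}{\pi}}$ is precisely the set of reversed bar-runs $\overline{p_2}$ as $p_1 p_2$ ranges over the canonical strings of $\grammar{A}{\pi}{}$. This is exactly where Property~\ref{prop:canon-1} is indispensable, since without the guarantee that canonical strings factor as (bar-free prefix)(bar suffix) the split point need not be a frontier state and the reversed bar transitions would fail to line up with the $\epsilon$-edges from the new start state. I would therefore isolate this correspondence as a separate claim and prove it by matching accepting runs of $\compautomaton{A}{\pi}$ against accepting runs of $\grammar{A}{\pi}{}$ under the transition reversal, checking that no spurious strings are introduced (the original $\acar,\acdr,\clazy$ edges are dropped, so only reversed bar edges carry labels) and none are lost (every frontier state is reachable from the start by a bar-free path, so it always supplies the prefix $p_1$ needed to complete $p_1 p_2$).
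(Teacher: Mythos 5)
Your proof is correct and takes essentially the same route as the paper's own: reduce the criterion $\{s\}$ to the fixed criterion $\epsilonset$ via Lemma~\ref{thm:grammar-eq}, canonicalize via Property~\ref{prop:canon-2}, use Property~\ref{prop:canon-1} to factor each canonical string as $p_1p_2$ with $p_1\in(\acar+\acdr+\clazy)^\ast$ and $p_2\in(\bcar+\bcdr)^\ast$, and identify $\overline{p_2}$ both as the forced prefix of $s$ and as a member of $\lang{\compautomaton{A}{\pi}}$ via the bar-reversal construction. The only differences are points where you are more careful than the paper: you use containment where the paper asserts the equality $\Simpl(\{p_1p_2\overline{p_2}s'\}) = \{p_1s'\}$ (which fails literally when $p_1$ contains $\clazy$, though non-emptiness still holds), and you isolate and sketch the run-level correspondence between $\grammar{A}{\pi}{}$ and $\compautomaton{A}{\pi}$ that the paper dismisses with ``from the construction.''
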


We   now    prove   our    main   result:   Our    slicing   algorithm
represented by \KEEP~(Algorithm~\ref{algo:comp-automata}) returns true if and only
if   \Simpl(\concatenate{\lang{\automaton{A}{\pi}{\epsilon}}}{\sigma})
is non-empty.

\begin{theorem}
  $\Simpl(\lang{\automaton{M}{\pi}{\sigma}}) \neq \emptyset
  \leftrightarrow\ \KEEP(e,\sigma)$
\end{theorem}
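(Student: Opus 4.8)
The test computed by $\KEEP$ is, by Algorithm~\ref{algo:comp-automata}, simply $\lang{\compautomaton{A}{\pi}} \cap \sigma \neq \emptyset$, so the plan is to rewrite the left-hand side $\Simpl(\lang{\automaton{M}{\pi}{\sigma}}) \neq \emptyset$ through a short chain of equivalences until it becomes exactly this intersection test. The two ingredients I would lean on are Lemma~\ref{thm:grammar-eq}, which lets me peel the criterion off as a suffix, and Lemma~\ref{thm:aecomp}, which already characterises the \emph{single-string} version of the problem in terms of $\compautomaton{A}{\pi}$. The only genuinely new work is to lift these single-string facts to an arbitrary, set-valued, prefix-closed criterion $\sigma$.

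First I would reduce the set-valued criterion to its individual strings. Writing $\sigma = \bigcup_{s \in \sigma}\{s\}$ and using Lemma~\ref{thm:grammar-eq} (in its automaton form, as already invoked in the proof of Lemma~\ref{thm:aecomp}) together with the distributivity of concatenation over union, I get $\lang{\automaton{M}{\pi}{\sigma}} = \bigcup_{s \in \sigma}\lang{\automaton{M}{\pi}{\{s\}}}$. Since the final rule defining $\Simpl$ makes it distribute over unions, $\Simpl(\lang{\automaton{M}{\pi}{\sigma}}) = \bigcup_{s \in \sigma}\Simpl(\lang{\automaton{M}{\pi}{\{s\}}})$, and hence the left-hand side is non-empty exactly when some $s \in \sigma$ satisfies $\Simpl(\lang{\automaton{M}{\pi}{\{s\}}}) \neq \emptyset$. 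Equivalently, $\Simpl(\lang{\automaton{M}{\pi}{\sigma}}) \neq \emptyset$ holds iff $\sigma$ meets the set $\{s \mid \Simpl(\lang{\automaton{M}{\pi}{\{s\}}}) \neq \emptyset\}$.

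Next I would substitute the characterisation from Lemma~\ref{thm:aecomp}, which identifies that very set with $\concatenate{\lang{\compautomaton{A}{\pi}}}{(\acar + \acdr)^\ast}$. The condition thus becomes $\sigma \cap \concatenate{\lang{\compautomaton{A}{\pi}}}{(\acar+\acdr)^\ast} \neq \emptyset$. The last step collapses the trailing $(\acar+\acdr)^\ast$ using prefix-closedness of $\sigma$: if $ws' \in \sigma$ with $w \in \lang{\compautomaton{A}{\pi}}$ and $s' \in (\acar+\acdr)^\ast$, then the prefix $w$ also lies in $\sigma$, giving $w \in \sigma \cap \lang{\compautomaton{A}{\pi}}$; the reverse inclusion is immediate by taking $s' = \epsilon$. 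This yields $\sigma \cap \lang{\compautomaton{A}{\pi}} \neq \emptyset$, which is precisely $\KEEP(e,\sigma)$, closing the chain.

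The routine parts (distributivity of $\Simpl$ and of concatenation over unions) are mechanical, so the main obstacles are two bookkeeping points that must be stated carefully. The first is the transfer of Lemma~\ref{thm:grammar-eq} from the context-free grammar $\grammar{G}{\pi}{\sigma}$ to the Mohri--Nederhoff automaton $\automaton{M}{\pi}{\sigma}$: I would justify this by observing that the Mohri--Nederhoff approximation and canonicalisation act only on the criterion-independent prefix and leave $\sigma$ sitting untouched as a suffix, so the decomposition survives the transformation. The second is the prefix-closure step: here I must use that $\lang{\compautomaton{A}{\pi}}$ is a language over $\{\acar,\acdr\}$ only (the completing automaton carries no barred labels), so that the prefix $w$ of a criterion string is itself a legitimate element of the prefix-closed $\sigma$. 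With these two points pinned down, the equivalence follows directly from Lemmas~\ref{thm:grammar-eq} and~\ref{thm:aecomp}.
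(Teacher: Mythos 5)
Your proof is correct, and while it rests on the same two pillars as the paper's proof (Lemma~\ref{thm:grammar-eq} and Lemma~\ref{thm:aecomp}), it factors them differently and more cleanly. The paper's own proof is asymmetric: its backward direction is essentially your argument (take $s \in \lang{\compautomaton{A}{\pi}} \cap \sigma$, apply Lemma~\ref{thm:aecomp} to the singleton $\{s\}$, and conclude from $s \in \sigma$), but its forward direction bypasses Lemma~\ref{thm:aecomp} entirely and re-runs the decomposition machinery inline: from $s \in \Simpl(\lang{\automaton{M}{\pi}{\sigma}})$ it invokes Lemma~\ref{thm:grammar-eq} and Property~\ref{prop:canon-2}, splits a canonical string into $p_3p_4$ with $p_3 \in (\acar+\acdr+\clazy)^\ast$ and $p_4 \in (\bcar+\bcdr)^\ast$, reads $\overline{p_4} \in \lang{\compautomaton{A}{\pi}}$ off the construction in Algorithm~\ref{algo:comp-automata}, and closes with prefix-closedness of $\sigma$ --- in effect re-proving, for the string at hand, the left-to-right inclusion of Lemma~\ref{thm:aecomp} that it never cites. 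Your single chain of equivalences --- distribute $\Simpl$ and concatenation over $\sigma = \bigcup_{s\in\sigma}\{s\}$, substitute the set equality of Lemma~\ref{thm:aecomp} for the set of ``good'' singleton criteria, and collapse the trailing $(\acar+\acdr)^\ast$ by prefix-closedness --- obtains both directions at once and avoids that duplication; the prefix-closedness step you isolate is exactly the one the paper uses ($\overline{p_4}$ is a prefix of $p_2 \in \sigma$, hence lies in $\sigma$). Your two flagged bookkeeping points are also genuine rather than pedantic: the paper silently applies the grammar-level Lemma~\ref{thm:grammar-eq} at the level of the Mohri--Nederhoff automata (both here and inside Lemma~\ref{thm:aecomp}), and its closing remark ``since $s \in \sigma$ we have $\Simpl(\lang{\automaton{M}{\pi}{\sigma}}) \neq \emptyset$'' tacitly uses the same union-distributivity of $\Simpl$ that you state explicitly. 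In short: same ingredients, but your write-up reuses Lemma~\ref{thm:aecomp} in full where the paper only uses it halfway, which makes the argument shorter and easier to check.
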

\begin{proof}
  We    first    prove     the    forward    implication.     Let
  $s    \in    \Simpl(\lang{\automaton{M}{\pi}{\sigma}})$.     From
  Lemma~\ref{thm:grammar-eq},
  $s                                                          \in
  \Simpl(\concatenate{\lang{\automaton{M}{\pi}{\epsilon}}}{\sigma})$.
  From                               Property~\ref{prop:canon-2},
  $s                    \in                   \Simpl(\concatenate
  {\canonical{\lang{\automaton{M}{\pi}{\epsilon}}}}{\sigma})$.
  Thus,    there   are    strings    $p_1,    p_2$   such    that
  $p_1  \in  \canonical{\lang{\automaton{M}{\pi}{\epsilon}}},\  p_2
  \in        \sigma,        s       =        \Simpl(\{p_1p_2\})$.
  Further $p_1$ in  turn can be decomposed as  $p_3p_4$ such that
  $p_3    \in    (\acar    +   \acdr    +    \clazy)^\ast$    and
  $p_4    \in   (\bcar    +   \bcdr)^\ast$.     We   also    have
  $\Simpl(\{p_4p_2\})  \subseteq  (\acar  +  \acdr)^\ast$.   Thus
  $\overline{p_4}$ is a prefix of $p_2$.

 From   the  construction   of   $\compautomaton{A}{\pi}$,  we   know
 $\overline{p_4} \in
 \lang{\compautomaton{A}{\pi}}$.   Further,   $\overline{p_4}$ is a prefix of  $p_2$ and $p_2 \in \sigma$, from
 the    prefix   closed    property   of    $\sigma$   we    have
 $\overline{p_4}  \in \sigma$.  This  implies
 $\compautomaton{A}{\pi} \cap \sigma \neq \emptyset$  and  thus
 $\KEEP(e, \sigma)$ returns true.

 Conversely, if $\KEEP(e, \sigma)$ is true, then $\exists
s:s \in \lang{\compautomaton{A}{\pi}} \cap\sigma$. In particular, 
$s \in \lang{\compautomaton{A}{\pi}}$. Thus, from
Lemma~\ref{thm:aecomp} we have
$\Simpl(\lang{\automaton{M}{\pi}{\{s\}}}) \neq \emptyset$. Further, since $s
\in \sigma$ we have $\Simpl(\lang{\automaton{M}{\pi}{\sigma}}) \neq \emptyset$.\end{proof}


\section{Extension  to higher order functions}
\label{sec:higher-order}
\begin{figure}[t!]
  \scalebox{0.89}{\renewcommand{\arraystretch}{1}
  \begin{minipage}{.39\textwidth}
  \begin{tabular}[b]{|@{\ }l@{\ }|}\hline 
    \begin{uprogram}
      \UFL(\DEFINE\ (\hof~\pf~\pl) 
      \UNL{1}  (\SRETURN\ $\pi$:(\pf\ \pl)))
    \end{uprogram}\\ \\ \\ \\ \\
    \begin{uprogram}
      \UFL (\DEFINE\ (\foldr~\f~\id\ \pl)
      \UNL{1} (\SIF (\NULLQ\ \pl)) (\SRETURN\ \id)
      \UNL{2} (\SRETURN\ (\f\ (\CAR\ \pl)\
      \UNL{7} \ (\foldr~\f~\id\ (\CDR~\pl)))))
    \end{uprogram}\\ \\
    \begin{uprogram}
      \UFL(\DEFINE\ (\fun~\px~\py) 
      \UNL{1}  (\SRETURN\ (+\ \py\ 1)))
    \end{uprogram} \\ \\ \\
    \begin{uprogram}
      \UFL (\DEFINE\ (\mainpgm)
      \UNL{1} $(\LET\  \plone\  \leftarrow$ (\CONS~\pa~(\CONS ~\pb~\NIL)) \IN
      \UNL{2}  $(\LET\  \g\  \leftarrow$  (\foldr~\fun~0) \IN
      \UNL{3} $(\SRETURN\ (\CONS\ (\hof~\CAR~\plone)$
      \UNL{3} $\phantom{(\SRETURN\ (\CONS\ }(\hof~\g~\plone)))))$
    \end{uprogram} \\ \hline \\
    (a) A program with higher order functions\\ \hline
  \end{tabular}
  \end{minipage}
  \begin{minipage}{.355\textwidth}
  \begin{tabular}[b]{|@{\ }l@{\ }|}\hline
    \begin{uprogram}
      \UFL (\DEFINE\ (\hofeven~l)
      \UNL{1}  (\SRETURN\ $\pi_{f}$:(\foldrf~0~l)))
    \end{uprogram}\\ \\
    \begin{uprogram}
      \UFL (\DEFINE\ (\hofcar~\pl)
      \UNL{1}  (\SRETURN\ $\pi_{c}$:(\CAR\ \pl)))
    \end{uprogram}\\ \\
    \begin{uprogram}
      \UFL (\DEFINE\ (\foldrf~\id\ \pl)
      \UNL{1} (\SIF (\NULLQ\ \pl)) (\SRETURN\ \id)
      \UNL{2} (\SRETURN\ (\fun\ (\CAR\ \pl)\
      \UNL{7} \ (\foldrf~\id\ (\CDR~\pl)))))
    \end{uprogram} \\ \\
    \begin{uprogram}
      \UFL(\DEFINE\ (\fun~\px~\py) 
      \UNL{1}  (\SRETURN\ (+\ \py\ 1)))
      \UNL{1}
    \end{uprogram}\\  \\
    \begin{uprogram}
      \UFL (\DEFINE\ (\mainpgm)
      \UNL{1} $(\LET\  \plone\  \leftarrow$ (\CONS~\pa~(\CONS ~\pb~\NIL)) \IN
      \UNL{2}  $(\LET\  \g\  \leftarrow$  (\foldrf~0) \IN
      \UNL{3} $(\SRETURN\ (\CONS\ (\hofcar~\plone)$
      \UNL{3} $\phantom{(\SRETURN\ (\CONS\ }(\hofeven~\plone))))))$
    \end{uprogram} 
    \\ \hline \\
      (b) Program in (a) after specialization.\\ \hline
  \end{tabular}
  \end{minipage}
  \begin{minipage}{.27\textwidth}
  \begin{tabular}[b]{|@{\ }p{49mm}@{\ }|}\hline
  \begin{uprogram}
    \UFL(\DEFINE\ (\hof~\pf~\pl) 
    \UNL{1}  (\SRETURN\ $\pi$:(\pf\ \pl)))
    \\ \\ \\ \\ \\
    \\ \\ \\ \\ \\
    \\ \\ \\ \\
    $(\DEFINE\ (\mainpgm)$
    \UNL{1}  $(\LET\  \plone\  \leftarrow$ (\CONS~\pa~\removed)       \IN
    \UNL{2}  $(\LET\  \g\  \leftarrow$  \removed) \IN
    \UNL{3} $(\SRETURN\ (\CONS\ (\hof~\CAR~\plone)$
    \UNL{3} $\phantom{(\SRETURN\ (\CONS\ } \removed))))$
    \end{uprogram} \\ \hline
      (c) Slice of the program in (a) with \mbox{~\quad}slicing
      criterion $\{\epsilon, \acar\}$. \\ \hline
  \end{tabular}
  \end{minipage}}
  \caption{An example higher order program}\label{fig:higher-order-mot-example}
\end{figure}

We now describe how our method can  also be used to slice higher order
programs.  This section has been included mainly for completeness, and
we do not make claims of  novelty. We handle all forms of higher-order
functions except  the cases of  functions being returned as  a result,
and functions  being stored  in data  structures---in our  case lists.
Even with  these limitations,  one can  write a  number of  useful and
interesting higher-order programs in our language. 

  Consider the  program in Figure~\ref{fig:higher-order-mot-example}(a).
  It contains a  higher order function $\hof$ which  applies its first
  argument  $\pf$   on  its  second  argument   $\pl$.   The  function
  $\mainpgm$  creates  a list  \plone\/  and  a function  value  \pg\/
  (through partial  application) and  uses these in  the two  calls to
  $\hof$.  Finally, \mainpgm\  returns the result of these  calls in a
  pair.  The program  exhibits  higher order  functions  that take  as
  actual arguments both manifest functions and partial applications.
 
  For our  first-order method  to work on  higher order  functions, we
  borrow               from                a                technique
  called                    \emph{firstification}~\cite{Mitchell:2009,
  Reynolds98a}. Firstification transforms a  higher-order program to a
  first-order program  without altering its semantics.  Our version of
  firstification repeatedly  (i) finds for each  higher-order function
  the bindings of each of its functional parameters, (ii) replaces the
  function  by a  specialized version  for each  of the  bindings, and
  (iii) replaces  each application  of f  by its  specialized version.
  These  steps  are repeated  till  we  we  are  left with  a  program
  containing  first  order  functions  only.   In  the  example  being
  considered, we  first discover  that {\tt  f} in  {\tt foldr}  has a
  single binding  to {\tt  fun} and  the {\tt  f} of  {\tt hof}  has a
  binding  to  {\tt car}.   Specialization  gives  the functions  {\tt
  foldr\_fun} and {\tt hof\_car}. We now see that {\tt f} of {\tt hof}
  has a second  binding to the partial application  {\tt (foldr fun)},
  This gives rise to a second  specialization of {\tt hof} called {\tt
  hof\_g}.

 The     program     after      firstification     is     shown     in
 Figure~\ref{fig:higher-order-mot-example}(b).    This    program   is
 subjected to demand analysis and  the results are reflected back into
 the higher-order  program.  Inside a  higher order function  that has
 been specialized,  the demand  on an  expression is  an union  of the
 demands on  the specialized  versions of  the expression.   Thus, the
 demand on $\pi$ is given by the union of the demands on $\pi_{c}$ and
 $\pi_{f}$. Where the higher order  function is applied, the demand on
 its  arguments  is  derived  from   the  demand  transformer  of  its
 specialized version. As an example, the  demand on {\tt lst1} in {\tt
 (hof  car lst1)}  is obtained  from the  demand transformers  of {\tt
 hof\_car}.  For the slicing  criterion $\{\epsilon, \acar\}$, the the
 demand on  the second argument  of {\tt (cons  (hof car lst1)  (hof g
 lst1))} is null and thus this argument and the binding of {\tt g} can
 both be sliced away. The slice  for $\{\epsilon, \acar\}$ is shown in
 Figure~\ref{fig:higher-order-mot-example}(c).

Note that  our simple  firstifier requires us  to statically  find all
bindings of a  functional parameter. This is not possible  if we allow
functions  to   be  returned   as  results   or  store   functions  in
data-structures. As an  example we can consider a  function $f$, that,
depending on a  calculated value $n$, returns a  function $g$ iterated
$n$             times            (i.e.             $g            \circ
g\;\circ \stackrel{n~\mathit{times}}{\ldots}  \circ\; g$).   A higher-order
function receiving this  value as a parameter  would be cannot be specialized using the techniques described, for example, in~\cite{Mitchell:2009}.  A similar
thing can happen if we allow functions in lists.

\begin{table}[t]
\caption{Statistics for incremental and non-incremental slicing}
\label{tab:exp-results}
\centering

\renewcommand{\arraystretch}{.89}
\begin{tabular}{@{}|@{\ }l|r|@{}r@{\ }|r|r|r|r|r|r|r|r|r|}\hline
  Program &
  \multicolumn{1}{l|}{Pre-} &
  \#exprs&
  \multicolumn{3}{|c|}{Slicing with $\{\epsilon\}$} &
  \multicolumn{3}{|c|}{Slicing with $\{\epsilon, 0\}$} &
  \multicolumn{3}{|c|}{Slicing with $\{\epsilon, 1\}$}\\ \cline{4-12}
  &
  \multicolumn{1}{@{\ }p{1.0cm}@{\ }|}{comput ation}&
  \multicolumn{1}{@{\ }p{1.0cm}@{\ }|}{in program} &
  \multicolumn{1}{|@{\ }p{1.0cm}@{\ }|}{Non-inc time (ms)} &
  \multicolumn{1}{|@{\ }p{1.0cm}@{\ }|}{Inc time (ms)} &
  \multicolumn{1}{|@{\ }p{0.9cm}@{\ }|}{\#expr in slice} &
  \multicolumn{1}{|@{\ }p{1.0cm}@{\ }|}{Non-inc time (ms)} &
  \multicolumn{1}{|@{\ }p{1.0cm}@{\ }|}{Inc time (ms)} &
  \multicolumn{1}{|@{\ }p{0.9cm}@{\ }|}{\#expr in slice} &
  \multicolumn{1}{|@{\ }p{1.0cm}@{\ }|}{Non-inc time (ms)} &
  \multicolumn{1}{|@{\ }p{1.0cm}@{\ }|}{Inc time (ms)} &
  \multicolumn{1}{|@{\ }p{0.9cm}@{\ }|}{\#expr in slice}\\
  \hline
  \multicolumn{12}{|c|}{First-order Programs} \\
\hline
treejoin &6900.0&581&6163.2&2.4&536&5577.2&2.8&538&5861.4&4.6&538 \\
\hline
deriv &399.6&389&268.0&1.6&241&311.2&1.6&249&333.2&2.3&266 \\
\hline
paraffins &3252.8&1152&2287.3&5.2&1067&2529.2&5.1&1067&2658.7&5.1&1067 \\
\hline
nqueens &395.4&350&309.9&1.5&350&324.6&1.5&350&328.1&1.6&350 \\
\hline
minmaxpos &27.9&182&18.1&0.9&147&19.5&0.8&149&20.5&0.9&149 \\
\hline
nperm &943.1&590&627.4&2.1&206&698.4&11.2&381&664.0&11.8&242 \\
\hline
linecharcount\!\! &11.7&91&7.0&0.5&69&7.5&0.5&78&7.4&0.5&82 \\
\hline
studentinfo &1120.6&305&858.2&1.2&96&854.6&1.3&101&1043.3&7.5&98 \\
\hline
knightstour &2926.5&630&2188.1&2.8&436&2580.6&12.2&436&2492.8&7.4&436 \\
\hline
takl &71.6&151&46.1&0.7&99&49.5&0.8&105&48.5&0.7&99 \\
\hline
lambda &4012.9&721&3089.0&2.7&26&3377.4&13.2&705&2719.8&5.3&33 \\
\hline
\multicolumn{12}{|c|}{Higher-order Programs} \\ \hline
parser &60088.2&820&46066.8&2.3&203&45599.0&2.3&209&61929.2&4.1&209 \\
\hline
maptail &22.1&96&5.5&0.5&51&15.4&0.6&67&17.4&0.6&56 \\
\hline
fold &21.4&114&13.3&0.4&17&14.4&0.5&76&16.9&0.6&33 \\


\hline
\end{tabular}

\end{table}


\section{Experiments and results}\label{sec:exp-result}
In this  section, we present the  results from our experiments  on the
implementations of  both versions of  slicing.  In the absence  of the
details of implementations of other  slicing methods, we have compared
the  incremental   step  of   our  method  with   the  non-incremental
version.
Our experiments show that the incremental slicing algorithm gives
benefits  even  when  the  overhead of  creating  the  completing
automata is amortized  over  even a 
few slicing criteria.

Our benchmarks consists of first order programs derived from the nofib
suite~\cite{nofib}. The higher order programs have been handcrafted to
bring out the issues related  to higher order slicing.  The program named 
{\bf parser} includes most of the higher  order parser combinators
required for  parsing. {\bf fold} corresponds  to the
example    in     Figure~\ref{fig:higher-order-mot-example}.     
Table~\ref{tab:exp-results} shows  the time required for  slicing with
different slicing  criteria.  For each  benchmark, we first  show, the
pre-computation time,  i.e.  the  time required to construct the
completing  automata.  We  then   consider  three  different  slicing
criteria, and  for each  slicing criterion, present the  times for
non-incremental slicing and the incremental step.
The  results   in  Table~\ref{tab:exp-results}   show  that   for  all
benchmarks, the  time required to  compute the completing  automata is
comparable   to    the   time   taken   for    computing   the   slice
non-incrementally.  Since computing completing  automata is a one time
activity, incremental slicing is very efficient even when a program is
sliced only  twice.  As seen in  Table~\ref{tab:exp-results}, the time
taken  for the  incremental step  is orders  of magnitude  faster than
non-incremental slicing,  thus confirming the benefits  of reusing the
completing automata.  

We also show the number of  expressions in the original program and in
the slice  produced to  demonstrate the  effectiveness of  the slicing
process itself.  Here are some  of the  interesting cases.  It  can be
seen  that the  slice  for  {\bf nqueens}  for  any slicing  criterion
includes the  entire program.  This  is because finding out  whether a
solution exists  for {\bf nqueens}  requires the entire program  to be
executed.   On  the  other  hand,   the  program  {\bf  lambda}  is  a
$\lambda$-expression evaluator  that returns a tuple  consisting of an
atomic value  and a  list.  The  criterion $\{\epsilon,  0\}$ requires
majority of the expressions in the  program to be present in the slice
to  compute  the atomic  value.   On  the  other hand,  the  criterion
$\{\epsilon\}$ or  $\{\epsilon, 1\}$  do not require  any value  to be
computed and expressions  which compute the constructor  only are kept
in  the slice,  hence our  algorithm is  able to  discard most  of the
expressions.  This  behavior can be  clearly seen in  the higher-order
example  {\bf fold}  where a  slicing criterion  $\{\epsilon, \acar\}$
selects an expression which only uses the first element of {\tt lst1},
thus allowing our slicing algorithm to discard most of the expressions
that construct  {\tt lst1}.
After  examining the  nature of  the benchmark  programs, the  slicing
criteria and  the slices, we  conclude that slicing is  most effective
when the slicing criterion selects  parts of a bounded structure, such
as a tuple, and  the components of the tuple are  produced by parts of
the program that are largely disjoint.

\section{Related work} 
\label{sec:relatedwork}
  Program slicing has been an  active area of research.  However,
  most  of  the  efforts  in slicing  have  been  for  imperative
  programs.     The    surveys~\cite{Tip95asurvey,    BinkleyH04,
    Silva:2012} give  good overviews  of the variants  of the
  slicing problem and their  solution techniques.  The discussion
  in this  section will   be centered mainly around   static and
  backward  slicing  of  functional programs.

  In the context of imperative programs, a slicing criterion is a
  pair consisting  of a  program point, and  a set  of variables.
  The slicing problem is to  determine those parts of the program
  that  decide  the  values  of  the  variables  at  the  program
  point~\cite{weiser84}.   A  natural  solution  to  the  slicing
  problem  is through  the use  of data  and control  dependences
  between  statements.    Thus  the  program  to   be  sliced  is
  transformed into  a graph  called the program  dependence graph
  (PDG)~\cite{Ottenstein:1984:PDG,  horwitz88},  in  which  nodes
  represent individual statements and edges represent dependences
  between them.  The slice consists of  the nodes in the PDG that
  are reachable  through a  backward traversal starting  from the
  node  representing the  slicing criterion.   Horwitz, Reps  and
  Binkley~\cite{horwitz88} extend PDGs  to handle interprocedural
  slicing.   They  show that  a  naive  extension could  lead  to
  imprecision in the computed slice due to the incorrect tracking
  of  the calling  context.   Their solution  is  to construct  a
  context-independent summary of each  function through a linkage
  grammar,  and then  use this  summary to  step across  function
  calls. The resulting graph is  called a system dependence graph
  (SDG).  Our method generalizes  SDGs to additionally keep track
  of the  construction of algebraic data  types (\CONS), selection
  of  components  of  data   types  (\CAR\/  and  \CDR)  and  their
  interaction, which may span across functions.
 
  Silva,  Tamarit  and  Tom\'as  ~\cite{Silva_System_Dependence_Graph}
  adapt  SDGs for  functional  languages, in  particular Erlang.   The
  adaptation is  straightforward except  that they  handle dependences
  that arise  out of pattern  matching.  Because  of the use  of SDGs,
  they can manage calling contexts precisely.  However, as pointed out
  by    the    authors    themselves,   when    given    the    Erlang
  program:                                           \{\mainpgm()~{\tt
  ->}~\px~=~\{1,2\},~\{\py,\pz\}~=~\px,~\py\},  their method  produces
  the         imprecise        slice         \{\mainpgm()~        {\tt
  ->}~ \px~=~\{1,2\},  \{\py,\removed \}~=~\px,  \py\} when  sliced on
  the variable \py. Notice that the  slice retains the constant 2, and
  this  is   because  of   inadequate  handling  of   the  interaction
  between \CONS\/  and \CDR.  For  the equivalent program (\LET  ~ \px
  $\leftarrow$  ~  (\CONS~1~2)~ \IN  ~  (\LET  ~ \py  ~  $\leftarrow$~
  (\CAR~ \px)~ \IN ~ \py))  with the slicing criterion $\epsilon$, our
  method would correctly compute the demand on the constant 2 as
  $\bcdr  (\epsilon  \cup  \acar)$.  This  simplifies  to  the  demand
  $\emptyset$, and 2 would thus not  be in the slice. Another issue is
  that  while the  paper  mentions  the need  to  handle higher  order
  functions,  it  does  not  provide details  regarding  how  this  is
  actually done.   This would  have been interesting  considering that
  the language considered allows lambda expressions.

  The slicing  technique that is closest  to ours is due  to Reps
  and  Turnidge~\cite{reps96}.   They use  projection  functions,
  represented  as  certain kinds  of  tree  grammars, as  slicing
  criteria.  This  is the same as  our use  of prefix-closed
  regular expressions.  Given a program \emph{P} and a projection
  function  $\psi$, their  goal  is to  produce  a program  which
  behaves  like $\psi\circ\emph{P}$.   The  analysis consists  of
  propagating   the   projection   function  backwards   to   all
  subexpressions   of  the   program.   After   propagation,  any
  expression with  the projection function  $\bot$ (corresponding
  to our $\emptyset$ demand), are sliced out of the program.  Liu
  and  Stoller~\cite{Liu:2003} also  use  a method  that is  very
  similar to~\cite{reps96}, but more extensive in scope.

  These  techniques  differ from  ours  in  two respects.   These
  methods,  unlike   ours,  do  not   derive  context-independent
  summaries of functions.  This results  in a loss of information
  due to  merging of  contexts and affects  the precision  of the
  slice.  Moreover,  the computation of function  summaries using
  symbolic demands enables the incremental version of our slicing
  method.   Consider,   as  an  example,  the   program  fragment
  $\pi\!\!:\!(\CONS~\pi_1\!\!:\!x~\pi_2\!\!:\!y)$    representing
  the  body of  a function.   Demand analysis  with the  symbolic
  demand   $\sigma$  gives   the   demand  environment   $\left\{
    \pi\mapsto          \sigma,          \pi_1\mapsto\bcar\sigma,
    \pi_2\mapsto\bcdr\sigma \right\}$.   Notice that  the demands
  $\pi_1$ and  $\pi_2$ are  in terms of  the symbols  \bcar\/ and
  \bcdr. This is  a result of our decision to  work with symbolic
  demands,   and,    as   a   consequence,   also    handle   the
  constructor-selector interaction symbolically.  If we now slice
  with  the default  criterion $\epsilon$  and then  canonicalize
  (instead of simplify), we are  left with the demand environment
  $\left\{      \pi\mapsto      \epsilon,      \pi_1\mapsto\bcar,
    \pi_2\mapsto\bcdr  \right\}$.  Notice  that  there is  enough
  information in  the demand  environment to deduce,  through the
  construction   of  the   completing  automaton,   that  $\pi_1$
  ($\pi_2$) will  be in the  slice only if the  slicing criterion
  includes  \acar(\acdr).  Since  the methods  in~\cite{reps96} and
  \cite{Liu:2003} deal with demands in their concrete forms,   it
  is  difficult  to see the incremental version being  replayed
  with their methods. 

  There are  other less related  approaches to slicing.   A graph
  based   approach  has   also   been  used   by  Rodrigues   and
  Barbosa~\cite{Rodrigues05componentidentification} for component
  identification in Haskell programs. Given the intended use, the
  nodes  of  the  graph  represents coarser  structures  such  as
  modules,  functions and  data type  definitions, and  the edges
  represents  relations  such  as  containment  (e.g.   a  module
  containing a  function definition).  On a  completely different
  note,  Rodrigues   and  Barbosa~\cite{Rodrigues_jucs_12_7}  use
  program  calculation   in  the  Bird-Meerteens   formalism  for
  obtaining  a  slice.  Given  a  program  $P$ and  a  projection
  function $\psi$,  they calculate a program  which is equivalent
  to $\psi\circ\emph{P}$.   However the method is  not automated.
  Finally,  dynamic  slicing  techniques have  been  explored  for
  functional programs by Perera  et al.~\cite{perera12}, Ochoa et
  al.~\cite{Ochoa:2008}     and Biswas~\cite{Biswas:1997}.

\section{Conclusions and Future Work}\label{sec:concl}
We  have  presented  a  demand-based  algorithm  for  incremental
slicing  of  functional programs.   The  slicing  criterion is  a
prefix-closed regular language and represents parts of the output
of the  program that  may be  of interest to  a user of our slicing
method.   We view  the slicing  criterion  as a  demand, and  the
non-incremental version of  the slicer does a  demand analysis to
propagate this demand through the program.  The slice consists of
parts of the program with non-empty demands after the propagation.  A
key  idea in  this analysis  is the  use of  symbolic demands  in
demand  analysis. Apart form better handling of calling
contexts that improves the precision of the analysis, 
this also helps in building the incremental version.

The incremental version builds on the non-incremental version.  A
per  program pre-computation  step  slices the  program with  the
default  criterion   $\epsilon$.   This  step  factors   out  the
computation that  is common  to slicing  with any  criterion. The
result, reduced to a canonical form,  can now be used to find the
slice for  a given criterion  with minimal computation.   We have
proven the correctness of  the incremental algorithm with respect
to the  non-incremental version.   And finally, we  have extended
our     approach     to     higher-order     programs     through
firstification. Experiments with  our implementation confirm the 
benefits of incremental slicing.
    
There are however two areas of concern, one related to efficiency
and the  other to precision. To  be useful, the slicer  should be
able  to slice  large  programs quickly.   While our  incremental
slicer  is  fast  enough,   the  pre-computation  step  is  slow,
primarily  because  of  the canonicalization step.   In
addition, the firstification process may create a large number of
specialized first-order programs. As  an example, our experiments
with  functional  parsers  show   that  the  higher-order  parser
combinators such  as or-parser  and the  and-parser are called 
often, and the arguments to these calls are in turns calls to 
higher order functions, for instance the Kleene closure and the
positive closure parsers. 

\begin{figure}[t!]
       \begin{uprogram}
      \UFL(\DEFINE\ (\mapsq~\pl) 
      \UNL{1} (\SIF (\NULLQ\ \pl) (\SRETURN\ \pl)
      \UNL{2} (\SRETURN\ (\CONS\ (\sq\ (\CAR\ \pl))
      \UNL{7} \ (\mapsq~(\CDR~\pl)))))
       \end{uprogram}
       \caption{Example to illustrate the imprecision due to
         Mohri-Nederhoff approximation}
       \label{fig:imprecision}
 \end{figure}

The other concern is that  while our polyvariant approach through
computation  of   function  summaries  improves   precision,  the
resulting  analysis   leads  to  an  undecidable   problem.   The
workaround  involves   an  approximation   that  could   lead  to
imprecision. As an example, consider the function
$\mathbf{mapsq}$ shown in Figure~\ref{fig:imprecision}.
 The reader  can verify  that the  function summary  for $\mapsq$
 would     be     given     as:     $\Lf{\mapsq}{1}{\sigma}     =
 \concatenate{\Lfun{\mapsq}{1}}{\sigma}$, where   $\Lfun{\mapsq}{1} $  is
 the language    $\epsilon    \mid    \acdr^{n}\bcdr^{n}    \mid
 \acdr^{n}\acar\clazy\bcar\bcdr^{n}$, for $n\geq 0$.   Now,   given  a   slicing
 criterion    $\sigma    =   \{\epsilon,    \acdr,    \acdr\acdr,
 \acdr\acdr\acar \}$ standing for the path to the third element of a list, it
 is   easy   to   see  that   $\Lf{\mapsq}{1}{\sigma}   $   after
 simplification would  give back $\sigma$ itself,  and this is  the most
 precise  slice. However,  due  to Mohri-Nederhoff  approximation
 $\Lfun{\mapsq}{1}$  would  be  approximated  by  $\epsilon  \mid
 \acdr^{n}\bcdr^{m} \mid  \acdr^{k}\acar\clazy\bcar\bcdr^{l}$,
 $n$, $m$, $k$, $l$ $\geq 0$. In
 this  case, $\Lfun{\mapsq}{1}$  would  be  $(\acar +  \acdr)^*$,
 keeping all the elements of the input list $\pl$ in the slice.

\balance
\bibliography{fun_slicing_bib}{}


\begin{thebibliography}{00}


\ifx \showCODEN    \undefined \def \showCODEN     #1{\unskip}     \fi
\ifx \showDOI      \undefined \def \showDOI       #1{{\tt DOI:}\penalty0{#1}\ }
  \fi
\ifx \showISBNx    \undefined \def \showISBNx     #1{\unskip}     \fi
\ifx \showISBNxiii \undefined \def \showISBNxiii  #1{\unskip}     \fi
\ifx \showISSN     \undefined \def \showISSN      #1{\unskip}     \fi
\ifx \showLCCN     \undefined \def \showLCCN      #1{\unskip}     \fi
\ifx \shownote     \undefined \def \shownote      #1{#1}          \fi
\ifx \showarticletitle \undefined \def \showarticletitle #1{#1}   \fi
\ifx \showURL      \undefined \def \showURL       #1{#1}          \fi
\providecommand\bibfield[2]{#2}
\providecommand\bibinfo[2]{#2}
\providecommand\natexlab[1]{#1}
\providecommand\showeprint[2][]{arXiv:#2}

\bibitem[\protect\citeauthoryear{Asati, Sanyal, Karkare, and Mycroft}{Asati
  et~al\mbox{.}}{}]%
        {asati14}
\bibfield{author}{\bibinfo{person}{Rahul Asati}, \bibinfo{person}{Amitabha
  Sanyal}, \bibinfo{person}{Amey Karkare}, {and} \bibinfo{person}{Alan
  Mycroft}.}
\newblock \showarticletitle{Liveness-Based Garbage Collection}. In
  \bibinfo{booktitle}{{\em Compiler Construction - 23rd International
  Conference, {CC} 2014}}.
\newblock


\bibitem[\protect\citeauthoryear{Binkley and Harman}{Binkley and
  Harman}{2004}]%
        {BinkleyH04}
\bibfield{author}{\bibinfo{person}{David Binkley} {and} \bibinfo{person}{Mark
  Harman}.} \bibinfo{year}{2004}\natexlab{}.
\newblock \showarticletitle{A survey of empirical results on program slicing.}
\newblock \bibinfo{journal}{{\em Advances in Computers\/}}
  \bibinfo{volume}{62} (\bibinfo{year}{2004}).
\newblock


\bibitem[\protect\citeauthoryear{Biswas}{Biswas}{1997}]%
        {Biswas:1997}
\bibfield{author}{\bibinfo{person}{Sandip~Kumar Biswas}.}
  \bibinfo{year}{1997}\natexlab{}.
\newblock {\em \bibinfo{title}{Dynamic Slicing in Higher-order Programming
  Languages}}.
\newblock \bibinfo{thesistype}{Ph.D. Dissertation}. \bibinfo{school}{University
  of Pennsylvania}, \bibinfo{address}{Philadelphia, PA, USA}.
\newblock
\showISBNx{0-591-65970-0}


\bibitem[\protect\citeauthoryear{Chakravarty, Keller, and
  Zadarnowski}{Chakravarty et~al\mbox{.}}{2003}]%
        {chakravarty03perspective}
\bibfield{author}{\bibinfo{person}{Manuel M.~T. Chakravarty},
  \bibinfo{person}{Gabriele Keller}, {and} \bibinfo{person}{Patryk
  Zadarnowski}.} \bibinfo{year}{2003}\natexlab{}.
\newblock \showarticletitle{A Functional Perspective on {SSA} Optimisation
  Algorithms}. In \bibinfo{booktitle}{{\em COCV, 2003}}.
\newblock


\bibitem[\protect\citeauthoryear{Horwitz, Reps, and Binkley}{Horwitz
  et~al\mbox{.}}{}]%
        {horwitz88}
\bibfield{author}{\bibinfo{person}{Susan Horwitz}, \bibinfo{person}{Thomas~W.
  Reps}, {and} \bibinfo{person}{David Binkley}.}
\newblock \showarticletitle{Interprocedural Slicing Using Dependence Graphs}.
  In \bibinfo{booktitle}{{\em Proceedings of the {ACM} SIGPLAN'88 Conference on
  Programming Language Design and Implementation (PLDI) 1988}}.
\newblock


\bibitem[\protect\citeauthoryear{Karkare, Khedker, and Sanyal}{Karkare
  et~al\mbox{.}}{2007}]%
        {karkare07liveness}
\bibfield{author}{\bibinfo{person}{Amey Karkare}, \bibinfo{person}{Uday
  Khedker}, {and} \bibinfo{person}{Amitabha Sanyal}.}
  \bibinfo{year}{2007}\natexlab{}.
\newblock \showarticletitle{Liveness of Heap Data for Functional Programs}. In
  \bibinfo{booktitle}{{\em Heap Analysis and Verification, HAV 2007}}.
\newblock


\bibitem[\protect\citeauthoryear{Kumar, Sanyal, and Karkare}{Kumar
  et~al\mbox{.}}{2016}]%
        {lazyliveness}
\bibfield{author}{\bibinfo{person}{Prasanna Kumar}, \bibinfo{person}{Amitabha
  Sanyal}, {and} \bibinfo{person}{Amey Karkare}.}
  \bibinfo{year}{2016}\natexlab{}.
\newblock \showarticletitle{Liveness-based Garbage Collection for Lazy
  Languages}. In \bibinfo{booktitle}{{\em Inrenational Symposium on Memory
  Management (ISMM 2016)}}.
\newblock


\bibitem[\protect\citeauthoryear{Liu and Stoller}{Liu and Stoller}{2003}]%
        {Liu:2003}
\bibfield{author}{\bibinfo{person}{Yanhong~A. Liu} {and}
  \bibinfo{person}{Scott~D. Stoller}.} \bibinfo{year}{2003}\natexlab{}.
\newblock \showarticletitle{Eliminating Dead Code on Recursive Data}.
\newblock \bibinfo{journal}{{\em Sci. Comput. Program.\/}}
  \bibinfo{volume}{47} (\bibinfo{year}{2003}).
\newblock
\showISSN{0167-6423}


\bibitem[\protect\citeauthoryear{Mitchell and Runciman}{Mitchell and
  Runciman}{2009}]%
        {Mitchell:2009}
\bibfield{author}{\bibinfo{person}{Neil Mitchell} {and} \bibinfo{person}{Colin
  Runciman}.} \bibinfo{year}{2009}\natexlab{}.
\newblock \showarticletitle{Losing Functions Without Gaining Data: Another Look
  at Defunctionalisation}. In \bibinfo{booktitle}{{\em Proceedings of the 2nd
  ACM SIGPLAN Symposium on Haskell}}.
\newblock


\bibitem[\protect\citeauthoryear{Mohri and Nederhof}{Mohri and
  Nederhof}{2000}]%
        {mohri00regular}
\bibfield{author}{\bibinfo{person}{Mehryar Mohri} {and}
  \bibinfo{person}{Mark-Jan Nederhof}.} \bibinfo{year}{2000}\natexlab{}.
\newblock \showarticletitle{Regular Approximation of Context-Free Grammars
  through Transformation}.
\newblock In \bibinfo{booktitle}{{\em Robustness in Language and Speech
  Technology}}. \bibinfo{publisher}{Kluwer Academic Publishers}.
\newblock


\bibitem[\protect\citeauthoryear{{NoFib}}{{NoFib}}{2017}]%
        {nofib}
\bibfield{author}{\bibinfo{person}{{NoFib}}.} \bibinfo{year}{2017}\natexlab{}.
\newblock \bibinfo{title}{{Haskell Benchmark Suite}}.
\newblock \bibinfo{howpublished}{\url{http://git.haskell.org/nofib.git}}.
  (\bibinfo{date}{Feb} \bibinfo{year}{2017}).
\newblock
\newblock
\shownote{(Last accessed).}


\bibitem[\protect\citeauthoryear{Ochoa, Silva, and Vidal}{Ochoa
  et~al\mbox{.}}{2008}]%
        {Ochoa:2008}
\bibfield{author}{\bibinfo{person}{Claudio Ochoa}, \bibinfo{person}{Josep
  Silva}, {and} \bibinfo{person}{Germ\'{a}n Vidal}.}
  \bibinfo{year}{2008}\natexlab{}.
\newblock \showarticletitle{Dynamic Slicing of Lazy Functional Programs Based
  on Redex Trails}.
\newblock \bibinfo{journal}{{\em Higher Order Symbol. Comput.\/}}
  \bibinfo{volume}{21} (\bibinfo{year}{2008}).
\newblock
\showISSN{1388-3690}


\bibitem[\protect\citeauthoryear{Ottenstein and Ottenstein}{Ottenstein and
  Ottenstein}{1984}]%
        {Ottenstein:1984:PDG}
\bibfield{author}{\bibinfo{person}{Karl~J. Ottenstein} {and}
  \bibinfo{person}{Linda~M. Ottenstein}.} \bibinfo{year}{1984}\natexlab{}.
\newblock \showarticletitle{The program dependence graph in a software
  development environment}.
\newblock \bibinfo{journal}{{\em ACM SIGPLAN Notices\/}}  \bibinfo{volume}{19}
  (\bibinfo{year}{1984}).
\newblock
\showCODEN{SINODQ}


\bibitem[\protect\citeauthoryear{Perera, Acar, Cheney, and Levy}{Perera
  et~al\mbox{.}}{}]%
        {perera12}
\bibfield{author}{\bibinfo{person}{Roly Perera}, \bibinfo{person}{Umut~A.
  Acar}, \bibinfo{person}{James Cheney}, {and} \bibinfo{person}{Paul~Blain
  Levy}.}
\newblock \showarticletitle{Functional programs that explain their work}. In
  \bibinfo{booktitle}{{\em {ACM} {SIGPLAN} International Conference on
  Functional Programming, ICFP 2012}}.
\newblock


\bibitem[\protect\citeauthoryear{Reps and Turnidge}{Reps and Turnidge}{1996}]%
        {reps96}
\bibfield{author}{\bibinfo{person}{Thomas~W. Reps} {and} \bibinfo{person}{Todd
  Turnidge}.} \bibinfo{year}{1996}\natexlab{}.
\newblock \showarticletitle{Program Specialization via Program Slicing}. In
  \bibinfo{booktitle}{{\em Partial Evaluation, International Seminar, Dagstuhl
  Castle, Germany}}.
\newblock


\bibitem[\protect\citeauthoryear{Reynolds}{Reynolds}{1998}]%
        {Reynolds98a}
\bibfield{author}{\bibinfo{person}{John~C. Reynolds}.}
  \bibinfo{year}{1998}\natexlab{}.
\newblock \showarticletitle{Definitional Interpreters for Higher-Order
  Programming Languages}.
\newblock \bibinfo{journal}{{\em Higher-Order and Symbolic Computation\/}}
  \bibinfo{volume}{11}, \bibinfo{number}{4} (\bibinfo{year}{1998}).
\newblock


\bibitem[\protect\citeauthoryear{Rodrigues and Barbosa}{Rodrigues and
  Barbosa}{2006a}]%
        {Rodrigues05componentidentification}
\bibfield{author}{\bibinfo{person}{Nuno~F. Rodrigues} {and}
  \bibinfo{person}{Luís~S. Barbosa}.} \bibinfo{year}{2006}\natexlab{a}.
\newblock \showarticletitle{Component Identification Through Program Slicing}.
\newblock \bibinfo{journal}{{\em Electronic Notes in Theoretical Computer
  Science\/}}  \bibinfo{volume}{160} (\bibinfo{year}{2006}).
\newblock


\bibitem[\protect\citeauthoryear{Rodrigues and Barbosa}{Rodrigues and
  Barbosa}{2006b}]%
        {Rodrigues_jucs_12_7}
\bibfield{author}{\bibinfo{person}{Nuno~F. Rodrigues} {and}
  \bibinfo{person}{Luís~S. Barbosa}.} \bibinfo{year}{2006}\natexlab{b}.
\newblock \showarticletitle{Program Slicing by Calculation}.
\newblock \bibinfo{journal}{{\em Journal of Universal Computer Science\/}}
  (\bibinfo{year}{2006}).
\newblock


\bibitem[\protect\citeauthoryear{Silva}{Silva}{2012}]%
        {Silva:2012}
\bibfield{author}{\bibinfo{person}{Josep Silva}.}
  \bibinfo{year}{2012}\natexlab{}.
\newblock \showarticletitle{A Vocabulary of Program Slicing-based Techniques}.
\newblock \bibinfo{journal}{{\em ACM Comput. Surv.\/}} (\bibinfo{year}{2012}).
\newblock
\showISSN{0360-0300}


\bibitem[\protect\citeauthoryear{Silva, Tamarit, and Tom\'{a}s}{Silva
  et~al\mbox{.}}{2012}]%
        {Silva_System_Dependence_Graph}
\bibfield{author}{\bibinfo{person}{Josep Silva}, \bibinfo{person}{Salvador
  Tamarit}, {and} \bibinfo{person}{C{\'e}sar Tom\'{a}s}.}
  \bibinfo{year}{2012}\natexlab{}.
\newblock \showarticletitle{System Dependence Graphs in Sequential Erlang}. In
  \bibinfo{booktitle}{{\em Proceedings of the 15th International Conference on
  Fundamental Approaches to Software Engineering}} {\em
  (\bibinfo{series}{FASE'12})}.
\newblock
\showISBNx{978-3-642-28871-5}


\bibitem[\protect\citeauthoryear{Smith and Wang}{Smith and Wang}{2000}]%
        {Smith:2000}
\bibfield{author}{\bibinfo{person}{Scott~F. Smith} {and}
  \bibinfo{person}{Tiejun Wang}.} \bibinfo{year}{2000}\natexlab{}.
\newblock \showarticletitle{Polyvariant Flow Analysis with Constrained Types}.
  In \bibinfo{booktitle}{{\em Proceedings of the 9th European Symposium on
  Programming Languages and Systems}} {\em (\bibinfo{series}{ESOP '00})}.
\newblock


\bibitem[\protect\citeauthoryear{Tip}{Tip}{1995}]%
        {Tip95asurvey}
\bibfield{author}{\bibinfo{person}{Frank Tip}.}
  \bibinfo{year}{1995}\natexlab{}.
\newblock \showarticletitle{A Survey of Program Slicing Techniques}.
\newblock \bibinfo{journal}{{\em Journal of Programming Languages\/}}
  \bibinfo{volume}{3} (\bibinfo{year}{1995}).
\newblock


\bibitem[\protect\citeauthoryear{Weiser}{Weiser}{1984}]%
        {weiser84}
\bibfield{author}{\bibinfo{person}{Mark Weiser}.}
  \bibinfo{year}{1984}\natexlab{}.
\newblock \showarticletitle{Program Slicing}.
\newblock \bibinfo{journal}{{\em {IEEE} Trans. Software Eng.\/}}
  \bibinfo{volume}{10} (\bibinfo{year}{1984}).
\newblock


\end{thebibliography}

\end{document}